\documentclass[11pt]{article}
\usepackage{amssymb,amsmath}

\usepackage{fullpage}

\usepackage{times}
\usepackage{helvet}
\usepackage{courier}
\usepackage{amssymb,amsmath}
\usepackage{comment}
\usepackage{graphicx}

\usepackage{mathabx}

\usepackage{amsmath}
\usepackage{comment}
\usepackage{amssymb}

\usepackage{authblk}

\usepackage{tikz}
\usetikzlibrary{arrows}

\newcommand{\ignore}[1]{}

\begin{document}

\title{The parameterized space complexity of embedding along a path}

\author[1]{Hubie Chen}
\author[2]{Moritz M\"{u}ller}
\affil[1]{
University of the Basque Country (UPV/EHU), 
E-20018 San Sebasti\'{a}n, Spain,
\emph{and} IKERBASQUE, Basque Foundation for Science, E-48011 Bilbao, Spain\\
  \texttt{hubie.chen@ehu.es}}
\affil[2]{ Kurt G\"{o}del Research Center, University of Vienna, Austria \\
  \texttt{moritz.mueller@univie.ac.at}}
\date{ }

\maketitle

\newcommand{\cl }{\mathcal }

\begin{abstract}
The embedding problem is to decide, given an ordered pair of structures,
whether or not there is an injective homomorphism from the 
first structure to the second.  
We study this problem using an established perspective in parameterized
complexity theory: the universe size of the first structure is taken
to be the parameter, and we define the embedding problem relative
to a class ${\cl A}$ of structures to be the restricted version
of the general problem where the first structure must come
from ${\cl A}$.  
We initiate a systematic complexity study of this problem family,
by considering   classes whose structures are 
what we call rooted path structures;
these structures 
have paths as Gaifman graphs.  
Our main theorem is a dichotomy theorem on classes of
rooted path structures.
 \end{abstract}


\newtheorem{theorem}{Theorem}[section]
\newtheorem{conjecture}[theorem]{Conjecture}
\newtheorem{corollary}[theorem]{Corollary}
\newtheorem{proposition}[theorem]{Proposition}
\newtheorem{prop}[theorem]{Proposition}
\newtheorem{lemma}[theorem]{Lemma}
\newtheorem{remark}[theorem]{Remark}
\newtheorem{exercisecore}[theorem]{Exercise}
\newtheorem{examplecore}[theorem]{Example}
\newtheorem{examples}[theorem]{Examples}

\newenvironment{example}
  {\begin{examplecore}\rm}
  {\hfill $\Box$\end{examplecore}}

\newenvironment{exercise}
  {\begin{exercisecore}\rm}
  {\hfill $\Box$\end{exercisecore}}

\newenvironment{proof}{\noindent\textbf{Proof\/}.}{$\Box$ \vspace{1mm}}

\newtheorem{researchq}{Research Question}

\newtheorem{newremarkcore}[theorem]{Remark}

\newenvironment{newremark}
  {\begin{newremarkcore}\rm}
  {\end{newremarkcore}}

\newtheorem{definitioncore}[theorem]{Definition}

\newenvironment{definition}
  {\begin{definitioncore}\rm}
  {\end{definitioncore}}

\newcommand{\str }{\mathbf }
\newcommand{\ar }{\mathit{ar} }
\newcommand{\atyp }{\mathrm{atyp} }
\newcommand{\pl}{\mathrm{pl}}
\newcommand{\refl}{\mathrm{refl}}

\newcommand{\emb}[1]{\ensuremath \textsc{emb}(#1)}
\newcommand{\pemb}[1]{\ensuremath p\textsc{-emb}(#1)}
\newcommand{\phom}[1]{\ensuremath p\textsc{-hom}(#1)}

\newcommand{\lgsh}{\textsc{longshort}}

\newcommand{\paraL}{\textup{para-L}}
\newcommand{\PATH}{\textup{PATH}}
\newcommand{\TREE}{\textup{TREE}}

\newcommand{\ustcon}{\textsc{ustcon}}
\newcommand{\lastcon}{\textsc{lastcon}}

\newcommand{\npprob}[4]{
\begin{center}
\begin{tabular}{|r p{10cm} |}
\hline
\multicolumn{2}{|l|}{\textsc{#1}}\\
\textit{Instance:}& #2.\\
\textit{Parameter:}& #3.\\
\textit{Problem:}& #4\\
\hline
\end{tabular}
\end{center}
}

\maketitle

\section{Introduction}

The \emph{embedding problem} is to decide,
given a pair $(\str A, \str B)$ of structures,
whether or not there is an 
\emph{embedding}---an injective homomorphism---from $\str A$
to $\str B$.
Intuitively, the embedding problem asks whether one can identify
a particular type of pattern, specified by the first structure $\str A$,
in the second structure $\str B$. 
Of course, this is a problem of a fundamental nature.
Indeed, a number of well-established computational problems can be
viewed as cases of this problem.
%
%
Examples include the problems {\sc clique, cycle} and {\sc path} which 
ask, given an undirected graph $\str G$ and a natural number $k$, whether $\str G$ 
contains a size $k$ clique, length $k$ cycle or length $k$ path, respectively.
%
%

In parameterized complexity theory,
the embedding problem is typically studied 
by taking the universe size of the first structure $\str A$
as the parameter;
this is the perspective and parameterization that we use here.
In the examples above, this corresponds to 
the standard parameterization
by the value $k$, and 
yields the famous 
parameterized problems 
$p$-{\sc clique}, $p$-{\sc cycle} and $p$-{\sc path}.
While this parameterized problem is in general intractable 
($p$-{\sc clique} is W[1]-complete),
it has been fruitful
to consider the following family of restricted versions of the problem:
for each class ${\cl A}$ of structures,
$\pemb{\cl A}$ is the parameterized embedding problem where the first 
structure $\str A$ must come from ${\cl A}$.
Throughout, we assume that each  class of structures 
under discussion is on a 
shared finite vocabulary.
%
In a now-famous result~\cite{AlonYusterZwick95-colorcoding}, 
it was established that
the problem $\pemb{\cl A}$ is fixed-parameter tractable
when ${\cl A}$ has bounded treewidth.
The algorithmic technique introduced there, called \emph{color coding},
in fact can be viewed as providing a Turing reduction from
the embedding problem $\pemb{\cl A}$ to 
the homomorphism problem $\phom{\cl A^*}$.
Here, 
the problem $\phom{\cdot}$ is defined analogously to $\pemb{\cdot}$,
but asks merely for a homomorphism 
(as opposed to an injective homomorphism);
 the class ${\cl A^*}$ is obtained from ${\cl A}$
by replacing each structure ${\str A}$ in ${\cl A}$
with the structure ${\str A^*}$,
 which is the structure ${\str A}$ but expanded so that there is, for each element $a$ of ${\str A}$, a unary relation symbol $U_a$ interpreted as~$\{ a \}$.

The family $\phom{\cl A}$ of homomorphism problems is well-understood.
A classification of these problems 
up to parameterized logarithmic space reduction is known~\cite{ChenMueller15-fineclass},
which shows that each problem $\phom{\cl A}$ is
either in para-L (parameterized logarithmic space), 
PATH-complete, TREE-complete, or W[1]-complete.
The complexity classes PATH and TREE 
are subclasses of FPT, and indeed the inclusions
para-L $\subseteq$ PATH $\subseteq$ TREE $\subseteq$ FPT
are known
(further discussion of these classes 
can be found in~\cite{ChenMueller15-fineclass}).
 It has been shown that para-L and PATH are not equal under the assumption that 
 Savitch's classical simulation cannot be improved~\cite{ChenMueller14-paramlogspace}.

%
%
%
%

In contrast, the family $\pemb{\cl A}$ of embedding problems
seems quite enigmatic.  While it has been conjectured
that
 the problem $\pemb{\cl A}$ is W[1]-hard
 when ${\cl A}$ does not have bounded treewidth
(see for example~\cite[p.355]{FlumGrohe06-parameterizedcomplexity}),
this research issue is (to our knowledge) wide open.
Indeed, only recently was the complexity of the
prominent problem {\sc biclique} 
resolved as 
W[1]-complete~\cite{Lin15-biclique};
this can be defined as the particular problem $\pemb{\cl A}$
where ${\cl A}$ is the class of complete bipartite graphs.
Let us mention that, concerning our examples, 
it is known that
$p$-{\sc path} is in $\paraL$ \cite{ChenMueller14-paramlogspace} and
that
$p$-{\sc cycle} is $\PATH$-complete \cite{ChenMueller15-fineclass}.

\subparagraph*{Contributions}
The motivation behind the present work was to initiate a 
systematic study of the $\pemb{\cl A}$ family of problems,
in hopes of eventually obtaining classification results
of the form known for the homomorphism problem.
We here focus on classes $\cl A$ of 
\emph{rooted path structures}. These are structures whose Gaifman graph is a path and in addition are {\em rooted} in the sense that one of its endpoints is the sole element of a relation.
While this implies that 
$\pemb{\cl A}$ is in PATH (and hence in FPT), 
the suggestion here is to first obtain
a thorough understanding of the problem family 
with respect to small complexity classes, and then 
attempt to scale up this understanding.

As examples of our findings, consider the following 
three classes of structures.

\begin{enumerate}
\item The class of rooted alternating paths.

\begin{tikzpicture}

\tikzset{vertex/.style = {shape=circle,draw}}
\tikzset{edge/.style = {->,> = latex'}}
\node[vertex,fill] (a) at  (0,0) {};
\node[vertex] (b) at  (1.5,0) {};
\node[vertex] (c) at  (3,0) {};
\node[vertex] (d) at  (4.5,0) {};
\node[vertex] (e) at  (6,0) {};

\node[vertex] (f) at   (7.5,0) {};
\node[vertex] (g) at   (9,0) {};
\node[vertex] (h) at   (10.5,0) {};
\draw[edge] (a) to (b);
\draw[edge] (c) to (b);
\draw[edge] (c) to (d);
\draw[edge] (e) to (d);

\draw[edge] (f) to (g);
\draw[edge] (h) to (g);

\node[draw=none] at (6.75,0) {$\cdots$};
\end{tikzpicture}

\item The class derived from rooted alternating paths by subdividing
each edge.

\begin{tikzpicture}

\tikzset{vertex/.style = {shape=circle,draw}}
\tikzset{edge/.style = {->,> = latex'}}
\node[vertex,fill] (a) at  (0,0) {};
\node[vertex] (ap) at  (0.75,0) {};

\node[vertex] (b) at  (1.5,0) {};
\node[vertex] (bp) at  (2.25,0) {};

\node[vertex] (c) at  (3,0) {};
\node[vertex] (cp) at  (3.75,0) {};

\node[vertex] (d) at  (4.5,0) {};
\node[vertex] (dp) at  (5.25,0) {};

\node[vertex] (e) at  (6,0) {};

\node[vertex] (f) at   (7.5,0) {};
\node[vertex] (fp) at   (8.25,0) {};
\node[vertex] (g) at   (9,0) {};
\node[vertex] (gp) at   (9.75,0) {};
\node[vertex] (h) at   (10.5,0) {};
\draw[edge] (a) to (ap);
\draw[edge] (ap) to (b);
\draw[edge] (c) to (bp);
\draw[edge] (bp) to (b);
\draw[edge] (c) to (cp);
\draw[edge] (cp) to (d);
\draw[edge] (e) to (dp);
\draw[edge] (dp) to (d);

\draw[edge] (f) to (fp);
\draw[edge] (fp) to (g);
\draw[edge] (h) to (gp);
\draw[edge] (gp) to (g);

\node[draw=none] at (6.75,0) {$\cdots$};
\end{tikzpicture}

\item The class derived from rooted alternating paths by adding one final non-alternating edge: 

\begin{tikzpicture}

\tikzset{vertex/.style = {shape=circle,draw}}
\tikzset{edge/.style = {->,> = latex'}}
\node[vertex,fill] (a) at  (0,0) {};
\node[vertex] (b) at  (1.5,0) {};
\node[vertex] (c) at  (3,0) {};
\node[vertex] (d) at  (4.5,0) {};
\node[vertex] (e) at  (6,0) {};

\node[vertex] (f) at   (7.5,0) {};
\node[vertex] (g) at   (9,0) {};
\node[vertex] (h) at   (10.5,0) {};
\node[vertex] (i) at   (12,0) {};
\draw[edge] (a) to (b);
\draw[edge] (c) to (b);
\draw[edge] (c) to (d);
\draw[edge] (e) to (d);

\draw[edge] (f) to (g);
\draw[edge] (h) to (g);
\draw[edge] (i) to (h);

\node[draw=none] at (6.75,0) {$\cdots$};
\end{tikzpicture}
\end{enumerate}

It follows from our results that, with respect to the embedding problem,
 the first class is in para-L, whereas the second and third one are PATH-complete.
The complexity of these classes can be derived from
a dichotomy theorem 
(Theorem~\ref{thm:dichotomy-rooted-oriented-paths})
that characterizes 
classes of rooted path structures where each
structure is an oriented path with a root.

Our main theorem is a dichotomy theorem 
which describes the complexity of the problem $\pemb{\cl A}$
for each class of rooted path structures 
(Theorem~\ref{thm:main}).
However, we do not succeed in obtaining a para-L versus PATH-complete
dichotomy, as in the previously mentioned dichotomy theorem.
Instead, for each such problem $\pemb{\cl A}$,
we either show it to be
$\PATH$-complete or we exhibit a parameterized logarithmic space algorithm that solves the problem with oracle access to a problem which we call the \emph{long-short path problem}.
This algorithmic result  is based on color coding and Reingold's algorithm.

A number of remarks are in order.
First, if one shows that the long-short path problem
is in para-L, our positive complexity result can immediately
be improved to containment in para-L.
Second, with respect to the problem family considered,
the long-short path problem is both unavoidable and occurs naturally
in the family,
in the following precise sense: there exists a class of rooted path
structures ${\cl A}$ such that $\pemb{\cl A}$ is
equivalent, under parameterized logarithmic space Turing reduction,
to the long-short path problem 
(Theorem~\ref{thm:long-short-as-embedding-problem}).
Hence, one necessarily needs to resolve the complexity
of the long-short path problem in order to describe all problems
in the studied problem family up to parameterized logarithmic space
Turing reduction.
Third, independently of what the complexity of the long-short path problem
turns out to be, we believe that the present work
makes a contribution in identifying and isolating the
long-short path problem as the
hardest of the embedding problems 
in our family that are of unknown complexity.
This identification can indeed be conceived of as a 
form of completeness result. 
Our view is that settling the complexity of this concrete problem
is a challenge to known techniques, and thus that a deeper understanding
of this problem
could mark healthy progress in the understanding of parameterized
logarithmic space.



To get some feeling for the difficulty, we encourage the reader to ponder whether the embedding problem associated with the following class of rooted path structures is in para-L or PATH-complete.  The class consists in undirected paths prolonged by alternating paths. In a picture:

\begin{enumerate}
\item[4.] 
\begin{tikzpicture}

\tikzset{vertex/.style = {shape=circle,draw}}
\tikzset{edge/.style = {->,> = latex'}}
\node[vertex,fill] (a) at  (0,0) {};
\node[vertex] (b) at  (1.5,0) {};
\node[vertex] (c) at  (3,0) {};
\node[vertex] (d) at  (4.5,0) {};
\node[vertex] (e) at  (6,0) {};
\node[vertex] (f) at   (7.5,0) {};
\node[vertex] (g) at   (9,0) {};
\node[vertex] (h) at   (10.5,0) {};
\node[vertex] (i) at   (12,0) {};
\node[vertex] (j) at   (13.5,0) {};

\draw[edge] (a) to (b);
\draw[edge] (b) to (a);
\draw[edge] (c) to (b);
\draw[edge] (b) to (c);
\draw[edge] (c) to (d);
\draw[edge] (d) to (c);
\draw[edge] (e) to (f);
\draw[edge] (f) to (e);

\draw[edge] (f) to (g);
\draw[edge] (h) to (g);
\draw[edge] (h) to (i);
\draw[edge] (j) to (i);

\node[draw=none] at (14.25,0) {$\cdots$};
\node[draw=none] at (5.25,0) {$\cdots$};
\end{tikzpicture}

\end{enumerate}
This gives the maybe the simplest (and most annoying) example of an embedding problem for rooted path structures which 
we conjecture to belong to para-L but are only able to  reduce  to the long-short path problem.

To close this introduction, let us make the following observations.
As mentioned, it is known that
when the class ${\cl A}$ has
bounded treewidth, 
 the problem
$\pemb{\cl A}$ is in FPT;
it has been conjectured that the problem $\pemb{\cl A}$
is W[1]-hard otherwise.
This conjecture thus suggests that 
one need only look 
at the
Gaifman graphs of the structures in a class ${\cl A}$
to determine whether or not $\pemb{\cl A}$ is in FPT.
Under the assumption that the complexity degrees dealt with
in this article are pairwise distinct,
our results contrast sharply with this suggestion:
we only consider structures with path
Gaifman graphs, but show that within the realm of such structures,
dichotomies occur, 
and hence the Gaifman graph does not carry the information needed
to determine the complexity of $\pemb{\cl A}$.
Moreover, the aforementioned conjecture implies that
if $\pemb{\cl A}$ is in FPT at all, then it is in FPT
via using color coding to reduce to $\phom{\cl A^*}$.
Our positive complexity results go strictly beyond this paradigm
of reducing to $\phom{\cl A^*}$ 
because $\phom{\cl A^*}$ is always $\PATH$-complete for any infinite class~$\cl A$ of rooted path structures over the same finite vocabulary
(this follows from~\cite{ChenMueller15-fineclass}).
%

\section{Preliminaries}

\subsection{Structures and logic}
A {\em (relational) vocabulary} is a finite set $\tau$ of relation symbols; every $R\in\tau$ has an associated 
{\em arity} $r\in\mathbb N$. Recall {\em $\tau$-formulas} are built from atomic $\tau$-formulas by means of $\wedge,\neg$ and $\exists x$, and  
an {\em atomic} $\tau$-formula has the form $R(x_1,\ldots x_{r})$ or $x_1=x_2$ where the $x_i$ are variables and $R$ is an $r$-ary relation 
symbol from $\tau$.
The notation $\varphi(x_1,\ldots, x_r)$ means that the free variables of the $\tau$-formula $\varphi$ are among $x_1,\ldots, x_r$. 

A {\em $\tau$-structure} $\str A$
consists of a  non-empty {\em universe} $A$ and  for every $r$-ary
relation symbol $R\in\tau$ a relation $R^{\str A}\subseteq A^{r}$.
We only consider structures with finite universes. 
A {\em (induced) substructure} of $\str A$ is a $\tau$-structure $\str B$ with $B\subseteq A$
and $R^{\str B}= R^{\str A}\cap B^r$ for every $r$-ary $R\in\tau$.

If $\varphi(x_1,\ldots, x_r)$ is a $\tau$-formula, $\str A\models\varphi(a_1,\ldots, a_r)$ means that the tuple $(a_1,\ldots, a_r)\in A^r$ 
satisfies $\varphi(x_1,\ldots, x_r)$ in~$\str A$. 
The {\em atomic type $\atyp(\bar a,\str A)$}  of a tuple $\bar a=(a_1,\ldots, a_r)\in A^r$ is the set of atomic 
$\tau$-formulas $\varphi(x_1,\ldots, x_r)$ such 
that $\str A\models\varphi(a_1,\ldots, a_r)$. If $\str A$ is clear from context, we write $\atyp(\bar a)$ instead $\atyp(\bar a,\str A)$.

We view graphs as $\{E\}$-structures $\str G$ for $E$ a binary relation symbol such that $E^{\str G}$ is irreflexive and symmetric.
Elements of $G$ are {\em vertices}, elements of $E^{\str G}$ are {\em edges}.
A {\em subgraph} of $\str G$ is a graph $\str H=(H,E^{\str H})$ with $H\subseteq G$ and $E^{\str H}\subseteq E^{\str G}$.
The {\em Gaifman graph} of a $\tau$-structure $\str A$ is the graph $\str G(\str A)$ with the same universe $A$ as $\str A$ and 
$(a,a')\in E^{\str G(\str A)}$ if $a,a'$ are distinct and appear together in some tuple $\bar a\in R^{\str A}$ 
for some relation symbol $R\in\tau$.

Let $\str A,\str B$ be $\tau$-structures. A {\em homomorphism from $\str A$ into $\str B$} is a function $h\colon A\to B$ (where $A,B$ 
are the universes of $\str A,\str B$ respectively) such that 
$h(\bar a)\in R^{\str B}$ for every relation symbol $R\in \tau$ and $\bar a\in R^{\str A}$; here for an $r$-tuple $\bar a=(a_1,\ldots, a_{r})\in A^{r}$ 
we write $h(\bar a)$ for the  $r$-tuple $(h(a_1),\ldots, h(a_{r}))\in B^{r}$.  Note $h\colon A\to B$ is a 
homomorphism from $\str A$ into $\str B$ if and only if for every tuple $\bar a$ from $A$ we have
$\atyp(\bar a,\str A)\subseteq \atyp(h(\bar a), \str B)$.
Injective homomorphisms are {\em embeddings}. 
An {\em endomorphism of $\str A$} is a 
homomorphism from $\str A$ to $\str A$. An endomorphism of $\str A$ is {\em trivial} if it is the identity on $A$.

Throughout we mainly stick to the following notational conventions. Classes of structures are denoted by calligraphic letters, structures by boldface letters and their universes by the corresponding italic letter.

\subparagraph*{Path structures}
A {\em path} is a graph $\str G$ isomorphic to $([k],\{(i,j)\mid |i-j|=1\})$ where $k:=|G|$. 
Here, we write $[k]=\{1,\ldots, k\}$ for $k\in\mathbb N,k\ge 1$.
A sequence $g_1,\ldots, g_k$ such that $g_i\mapsto i$ is such an isomorphism is an {\em enumeration of} $\str G$. The vertices $g_1$ and $g_k$ are {\em endpoints}. The path is said to {\em connect} its endpoints and have {\em length} $k-1$ (number of edges). 
If $\str G$ is a graph, then a {\em path in $\str G$} is a subgraph of $\str G$ that is a path.

A \emph{path structure (of vocabulary $\tau$)} $\str P$ is a $\tau$-structure whose Gaifman graph $\str G(\str P)$ is a path. 
An  {\em enumeration of} $\str P$ is an enumeration of $\str G(\str P)$, and by an {\em endpoint of} $\str P$ we mean
 one of $\str G(\str P)$. Note that a path structure has exactly two enumerations. 
 A path structure $\str P$ (of vocabulary $\tau$) is \emph{rooted}
if $\tau$ contains the unary relation symbol $\textit{root}$ such that
$\textit{root}^{\str P}$ is a singleton containing one of the endpoints of $\str G(\str P)$. For a rooted path structure $\str P$ with $|P|=k$, 
by an enumeration of $\str P$ we 
mean an enumeration $p_1,\ldots,p_k$ with $\textit{root}^{\str P}=\{p_1\}$. 
Note that a rooted path structure has exactly one enumeration.
For $i \in [k-1]$, we write $e_i$ for the pair $(p_i, p_{i+1})$;
we refer to the $e_i$ as the \emph{edges} of $\str P$.

\subsection{Parameterized logarithmic space}
We consider  (classical) problems $Q$ as subsets of $\{0,1\}^*$, the set of binary strings.
Our model of computation are Turing machines with a read-only input tape, several work-tapes and a write-only output tape (the head does not move left and writes only $0,1$ and no blank). A Turing machine with oracle $Q$  
additionally has a write-only oracle tape
special states ``yes'', ``no'' and ``?''; 
upon entering ``?'' the content $y$ of the oracle tape is erased, its head placed on the first cell and 
state ``yes'' or ``no'' is entered according to whether $y\in Q$ or not; in such a step the machine is said to {\em query} $y$.

We follow \cite{FlumGrohe06-parameterizedcomplexity} notationally. We view {\em parameterized problems} as pairs $(Q,\kappa)$ where $Q$ is a classical problem 
and $\kappa:\{0,1\}^*\to\mathbb N$ is a {\em parameterization}. We assume that parameterizations are computable in logarithmic space, that is, the binary representation of $\kappa(x)$ is computable from $x$ in space $O(\log |x|)$; here, $|x|$ is the length of $x\in\{0,1\}^*$. 
  
  We exemplify how we present parameterized problems. The parameterized embedding problem associated with a class of structures $\cl A$ is
\npprob{\pemb{$\cl A$}}{A structure $\str A\in\cl A$ and a structure $\str B$}{$|A|$}
{Is there an embedding from $\str A$ into $\str B$?}
The underlying classical problem is $\emb{\cl A}$, the parameterization maps $(\str A,\str B)$ to $|A|$.
 
 The class $\paraL$ consists of those parameterized problems $(Q,\kappa)$ decidable in
  {\em parameterized logarithmic space (with respect to $\kappa$)}, that is, space $f(\kappa(x))+O(\log |x|)$ for 
 some computable $f\colon \mathbb N\to\mathbb N$. Functions computable within this space are {\em pl-computable (with respect to $\kappa$)}.
The class XL consists of those $(Q,\kappa)$ decidable in space $f(\kappa(x))\cdot \log |x|$ for 
 some computable $f\colon\mathbb N\to\mathbb N$. These notions are from~\cite{fellowscai}, our notation follows \cite{para}. 
The class $\PATH$ has been introduced in~\cite{ElberfeldStockhusenTantau15-paramclasses}. It contains those $(Q,\kappa)$ that are accepted by some nondeterministic algorithm (i.e. Turing machine) running in parameterized logarithmic space and which on input $x$ makes at most $f(\kappa(x))\cdot\log |x|$ many nondeterministic steps. 

\begin{theorem}[\cite{ChenMueller15-fineclass}] \label{thm:upperbound} Let $\cl P$ be a decidable class of path structures. Then 
$\pemb{\cl P}\in\PATH$.
\end{theorem}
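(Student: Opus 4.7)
The plan is to combine color coding with a step-by-step nondeterministic walk along the unique (up to reversal) enumeration of the path, both of which fit the limited-nondeterminism budget of $\PATH$. Given an input $(\str A,\str B)$ with $\str A\in\cl P$ on $k:=|A|$ elements, I first copy $\str A$ into $f(k)$-bounded memory (its encoding has size bounded by a function of $k$ alone, since the vocabulary is fixed), verify $\str A\in\cl P$ using the decision procedure for $\cl P$ (this only depends on $\str A$, so it runs in $f(k)$ space), and compute one of the two enumerations $a_1,\ldots,a_k$ of the Gaifman path; the algorithm is executed twice, once per enumeration. I also tabulate the atomic types $\atyp(a_j,\str A)$ for $j\in[k]$ and $\atyp((a_j,a_{j+1}),\str A)$ for $j\in[k-1]$. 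A key observation is that, because $\str G(\str A)$ is a path, the distinct elements of any tuple $\bar a$ occurring in a relation of $\str A$ form a clique in the path, hence are at most two and are consecutive in the enumeration; so these finitely many atomic types suffice to certify homomorphisms.

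Next I invoke color coding to enforce injectivity. I appeal to a logspace-constructible $(|B|,k)$-family $\cl F$ of perfect hash functions $c\colon B\to[k]$, of size $|\cl F|=2^{O(k)}\log|B|$, such that for every $k$-subset $S\subseteq B$ some $c\in\cl F$ is injective on $S$. The algorithm nondeterministically guesses an index into $\cl F$, using $O(k)+O(\log\log|B|)$ bits, and then evaluates the chosen $c$ pointwise in logspace on demand.

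With $c$ fixed, I nondeterministically construct an embedding $h$ vertex by vertex, keeping in memory only the current index $j$, the previous image $h(a_{j-1})$, the new guess $h(a_j)$, and a bitmask $U\subseteq[k]$ of colors already used; total working memory is $O(k)+O(\log|B|)$. At step $j$ I guess $h(a_j)\in B$ (using $\lceil\log|B|\rceil$ nondeterministic bits) and check (i) $\atyp(a_j,\str A)\subseteq\atyp(h(a_j),\str B)$, (ii) when $j\ge 2$, $\atyp((a_{j-1},a_j),\str A)\subseteq\atyp((h(a_{j-1}),h(a_j)),\str B)$, and (iii) $c(h(a_j))\notin U$; then I update $U$ by adding $c(h(a_j))$. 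I accept once $j=k$ is reached. Correctness is immediate in both directions: if an embedding $h^\star$ exists then some $c\in\cl F$ separates $h^\star(A)$, so the guesses $h(a_j):=h^\star(a_j)$ pass all checks; conversely, condition (iii) forces the $h(a_j)$ to have pairwise distinct colors and hence to be distinct, while (i)--(ii) together with the path observation ensure $h$ is a homomorphism.

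For the resources, the working space is $f(k)+O(\log|B|)$, and the total nondeterminism along an accepting branch is $1+O(k)+O(\log\log|B|)+k\lceil\log|B|\rceil$, which is bounded by $g(k)\cdot\log|B|$ for a suitable computable $g$, meeting the $\PATH$ bound. The step I expect to be the main obstacle when filling in the details is the explicit construction of the $(|B|,k)$-perfect hash family within parameterized logspace, together with logspace evaluability; this is by now standard in the derandomization-of-color-coding literature (via, e.g., $\epsilon$-biased spaces or the Naor--Schulman--Srinivasan-style constructions), but requires careful bookkeeping to carry out within the $\paraL$ resource model rather than merely in polynomial time.
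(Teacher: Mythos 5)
Your proposal is correct, but note that the paper itself contains no proof of this statement to compare against: Theorem~\ref{thm:upperbound} is imported verbatim from~\cite{ChenMueller15-fineclass}. Judged on its own merits, your argument is sound and is the standard one: since the vocabulary is fixed and $\str G(\str A)$ is a path, every tuple in a relation of $\str A$ has at most two distinct components and these are consecutive in the enumeration (this is exactly the observation in the Remark following the theorem), so checking the singleton types and the consecutive-pair types certifies a homomorphism; color coding then upgrades the guessed homomorphism to an embedding, and the nondeterminism budget $O(k)+O(\log\log|B|)+k\lceil\log|B|\rceil$ fits the $f(\kappa)\cdot\log|x|$ bound defining $\PATH$. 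The one step you flag as a potential obstacle---a parameterized-logspace-evaluable perfect hash family---is in fact supplied by Lemma~\ref{lem:cc} of this very paper (the Flum--Grohe family $h_{p,q}(m)=(q\cdot m\bmod p)\bmod k^2$): you can nondeterministically guess $p,q$ with $O(\log k+\log\log|B|)$ bits, evaluate $h_{p,q}$ pointwise in logspace, and keep a $k^2$-bit mask of used colors, which stays within $f(k)+O(\log|B|)$ space; small $|B|$ is handled by brute force. Your construction is in effect a nondeterministic analogue of the deterministic machinery in the proof of Lemma~\ref{lem:easy} (same hash family, same walk along the enumeration, with Reingold's algorithm and the oracle replaced by guessing), so it is fully consistent with the techniques of the paper. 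Two cosmetic remarks: running the algorithm once per enumeration is harmless but unnecessary, since any embedding is found by walking either enumeration; and you should dispose of the trivial case $|B|<k$ before invoking the hash family.
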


A {\em pl-reduction} from $(Q,\kappa)$ to $(Q',\kappa')$ is a reduction $R$ from $Q$ to $Q'$ which is pl-computable with respect to $\kappa$
and such that there is a computable $f\colon\mathbb N\to\mathbb N$ such
 that $\kappa'(R(x))\le f(\kappa(x))$ for all $x\in\{0,1\}^*$. 
A {\em pl-Turing reduction} from $(Q,\kappa)$ to $(Q',\kappa')$ is an algorithm with oracle $Q'$ that decides $Q$, 
runs in parameterized logarithmic space with respect to $\kappa$, and has {\em bounded oracle access}: 
there is a computable $f\colon\mathbb N\to\mathbb N$ such that on input $x$ the algorithm only queries
$y$ with $\kappa'(y)\le f(\kappa(x))$.

\begin{remark}\em 
Let $\cl P$ be a decidable class of path structures of vocabulary $\tau$. Any tuple in any relation in a structure in $\cl P$ 
can have at most $2$ distinct components. One can use this observation to give a pl-reduction of  $\pemb{\cl P}$  to $\pemb{\cl P'}$ where 
$\cl P'$ is a decidable class of ``edge-coloured graphs'', i.e.\ path structures 
of some vocabulary $\tau'$ all of whose relation symbols have arity $2$. 
\end{remark}

The following goes back to \cite{ElberfeldStockhusenTantau15-paramclasses}, in the form stated it appears in 
\cite[Theorem~4.7]{ChenMueller15-fineclass}. 

\begin{theorem}\label{thm:ustcon} The following is $\PATH$-complete (with respect to pl-reductions).
\npprob{$p$-$\ustcon$}{A graph $\str G$, vertices $s,t\in G$ and $\ell\in \mathbb N$}{$\ell$}
{Is there a path connecting $s$ and $t$ in $\str G$ of length at most  $\ell$?}
\end{theorem}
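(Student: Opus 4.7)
The plan is to show containment in $\PATH$ by a direct nondeterministic algorithm and $\PATH$-hardness by pl-reducing an arbitrary $(Q,\kappa)\in\PATH$ to $p$-$\ustcon$. For containment, on input $(\str G,s,t,\ell)$ the algorithm stores on its worktape a current vertex (initially $s$) and a counter; in each of up to $\ell$ rounds it nondeterministically guesses a neighbour of the current vertex and updates, accepting at the end iff the current vertex equals $t$. The work space is $O(\log|\str G|)+O(\log\ell)$ and the total number of nondeterministic bits is $O(\ell\cdot\log|\str G|)$, matching the $\PATH$ definition. Correctness holds because in an undirected graph an $s$-to-$t$ walk of length at most $\ell$ exists iff such a simple path does, as cycles in a walk may be cut out without increasing length.

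For hardness, fix a nondeterministic machine $M$ witnessing $(Q,\kappa)\in\PATH$, running in space $g(\kappa(x))+O(\log|x|)$ with at most $g(\kappa(x))\cdot\log|x|$ nondeterministic steps on input $x$; by standard preprocessing I assume $M$ has a unique accepting halting configuration $c_{\mathrm{acc}}$. Given $x$ with $k:=\kappa(x)$, the reduction outputs $(\str G_x,s,t,\ell)$ with $\ell:=g(k)$. The vertices of $\str G_x$ are pairs $(c,i)$ where $c$ is a configuration of $M$ on $x$ and $i\in\{0,\ldots,g(k)\}$, with $s:=(c_{\mathrm{init}},0)$ and $t:=(c_{\mathrm{acc}},g(k))$. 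I put an undirected edge between $(c,i)$ and $(c',i+1)$ iff, starting from $c$, some computation of $M$ consuming exactly $\lceil\log|x|\rceil$ nondeterministic bits (interleaved with any number of deterministic steps) ends in $c'$; I additionally join $(c_{\mathrm{acc}},i)$ to $(c_{\mathrm{acc}},i+1)$ for every $i$ to accommodate runs that accept after fewer nondeterministic steps.

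Because edges lie only between adjacent levels, every $s$-$t$ path in $\str G_x$ has length exactly $g(k)$, and such paths correspond to accepting runs of $M$ on $x$ partitioned into $g(k)$ consecutive blocks of $\lceil\log|x|\rceil$ nondeterministic bits. The reduction is pl-computable: to test adjacency of $(c,i)$ and $(c',i+1)$ one enumerates the at most $|x|$ possible advice strings and simulates $M$ deterministically from $c$, which fits within $g(k)+O(\log|x|)$ space. The main obstacle the reduction has to overcome is precisely the parameter bound: charging one unit of path length per individual nondeterministic bit would yield $\ell=g(k)\log|x|$, violating the pl-reduction condition that $\ell$ be bounded by a function of $\kappa(x)$ alone. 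Bundling the nondeterminism into blocks of $\lceil\log|x|\rceil$ bits, so that each block contributes one graph edge but up to $|x|$ out-neighbours per node, is what simultaneously keeps $\ell=g(k)$ and the graph of polynomial size in $|x|$.
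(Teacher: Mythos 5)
This theorem is not proved in the paper at all: it is imported verbatim from the literature (the paper attributes it to Elberfeld--Stockhusen--Tantau and to Theorem~4.7 of Chen--M\"uller), so there is no in-paper argument to compare against. Your proof is the standard one for completeness under bounded nondeterminism: guess-and-verify for membership, and a layered configuration graph with nondeterministic bits bundled into blocks of size $\lceil\log|x|\rceil$ for hardness. Both halves are essentially sound, and you correctly identify the one point where the reduction could fail (charging one edge per nondeterministic bit would blow up the parameter) and the block-bundling device that fixes it.

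Three small inaccuracies are worth repairing. First, the claim that \emph{every} $s$-$t$ path in $\str G_x$ has length exactly $g(k)$ is false: since the graph is undirected, a simple path may zigzag between levels and have length $g(k)+2, g(k)+4,\ldots$. What you actually need, and what does hold, is that any $s$-$t$ path of length \emph{at most} $\ell=g(k)$ must increase the level at every step (each edge changes the level by $\pm 1$ and the net change must be $g(k)$), so only such monotone paths are relevant to the question asked. Second, the padding edges $(c_{\mathrm{acc}},i)$--$(c_{\mathrm{acc}},i+1)$ do not by themselves handle an accepting run whose \emph{last} block consumes fewer than $\lceil\log|x|\rceil$ nondeterministic bits: under your edge definition, which demands exactly $\lceil\log|x|\rceil$ bits, that final partial block produces no edge into any $(c_{\mathrm{acc}},i+1)$. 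The cleanest fix is to also put an edge from $(c,i)$ to $(c_{\mathrm{acc}},i+1)$ whenever $M$ reaches $c_{\mathrm{acc}}$ from $c$ consuming at most $\lceil\log|x|\rceil$ bits. Third, the graph is not of polynomial size in $|x|$: it has $2^{g(k)}\cdot\mathrm{poly}(|x|)$ vertices because the configurations use $g(k)+O(\log|x|)$ space. This is harmless --- a pl-reduction may produce output of that size, being computable in space $f(k)+O(\log|x|)$ --- but the phrasing should be corrected. None of these affects the viability of the approach.
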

This is a parameterized version of the classical {\em undirected $s$-$t$-connectivity problem} $\ustcon$.

\section{Main theorem statement}

Given an ordered pair $e=(a,b)$ we will use the following non-standard notation. When $\ell\in\mathbb N$ we 
define $e^{-\ell}$ to be $(a,b)$ when $\ell$ is even, and to be $(b,a)$ when $\ell$ is odd.

Let $\str P$ be a rooted path structure with  enumeration
$p_1,\ldots, p_k$ where $k:=|P|$. For $d\in\mathbb N$, we say $e_{i}$
is \emph{unfoldable of degree $d$} if $i > d$ and, for each $\ell\in [d]$,
it holds that $\atyp(e_{i}) \not\subseteq \atyp(e_{i-\ell}^{-\ell})$.
Note that all edges are unfoldable of degree 0, and being unfoldable of degree $d$ implies being 
unfoldable of degree $d'$ for all $d'\le d$. By {\em unfoldable} edges we mean edges 
unfoldable of degree 1. 
The {\em unfoldability degree} of $\str P$ is the 
sum $\sum_{i\in[k-1]}d_i$ where $d_i$ is the maximal number
such that $e_i$ is unfoldable of degree $d_i$.

A class of rooted path structures
$\mathcal{P}$  has \emph{bounded unfoldability degree}
if there is a constant $c \in \mathbb N$ such that
every  $\str P \in \mathcal{P}$
has unfoldability degree at most $c$.

\begin{examples}\label{exs:fold}\em  Consider the path structures pictured in the Introduction.
In the structures pictured in (1) and (4), no edge is unfoldable.
In the structures pictured in (2), exactly the edges at even positions are unfoldable; they are unfoldable of degree 1 but not of degree 2. 
In each structure pictured in (3), only the last edge is unfoldable, 
of degree equal to the number of edges in the structure.
\end{examples}

\begin{theorem}[Main]\label{thm:main}
Let $\mathcal{P}$ be a decidable class of rooted path structures of some vocabulary $\tau$.
If $\mathcal{P}$ has bounded unfoldability degree,
then there is a pl-Turing reduction of 
$\pemb{\cl P}$ to the parameterized problem
\npprob{$p$-$\lgsh$}{A graph $\str G$, vertices $s,t\in G$ and $k,\ell\in \mathbb N$ with $k<\ell$}{$\ell$}
{Is it true that $\str G$ contains a path of length at least $\ell$ with endpoint $s$ or  a path of length exactly $k$ connecting
$s$ and $t$?}
Otherwise, $\pemb{\cl P}$ is $\PATH$-complete.
\end{theorem}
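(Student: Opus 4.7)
\medskip\noindent
\textbf{Proof plan.} The statement splits into two directions: the upper bound (bounded unfoldability degree implies a pl-Turing reduction to $p$-$\lgsh$) and the lower bound ($\PATH$-hardness when unfoldability degree is unbounded). The upper bound in $\PATH$ already follows from Theorem~\ref{thm:upperbound}, so for the second case it remains only to establish hardness. The plan is to treat each direction separately, exploiting the notion of unfoldability as the right structural parameter controlling how much ``true path-finding'' is unavoidable.

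For the upper bound direction, assume every $\str P\in\cl P$ has unfoldability degree at most $c$. Given enumeration $p_1,\ldots,p_k$, mark the (at most $c$) positions $i$ where $e_i$ is unfoldable; these break the edge sequence into at most $c+1$ maximal \emph{foldable blocks} whose internal edges all satisfy $\atyp(e_i)\subseteq\atyp(e_{i-1}^{-1})$. Because types are drawn from a fixed finite pool, this ``foldability'' relation quickly stabilises and each block looks, up to a short prefix, like an alternating path in the sense of Example (1). The algorithm first guesses (by enumerating at most $|B|^{O(c)}$ many candidates in parameterized logspace) where the endpoints of the unfoldable edges are mapped; this reduces the task to solving, independently, an embedding subproblem for each foldable block with its boundary conditions prescribed. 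Within a foldable block one uses (pl-computable) perfect hash families to color code $\str B$ so that injectivity on the block is equivalent to colors being visited in the right order, after which Reingold's undirected $s$-$t$-connectivity algorithm is invoked on an appropriate ``colored'' auxiliary graph. Finally, each block's reachability question is phrased as an instance of $p$-$\lgsh$: interior blocks demand a path of an \emph{exact} length between two prescribed colored vertices (the ``short'' alternative), while the tail block starting at the root (or any block with a free endpoint) demands a path of \emph{at least} a certain length with one endpoint fixed (the ``long'' alternative). Stitching these oracle calls together yields the pl-Turing reduction.

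For the lower bound, assume $\cl P$ has unbounded unfoldability degree. The goal is a pl-reduction from $p$-$\ustcon$ (Theorem~\ref{thm:ustcon}). Given $(\str G,s,t,\ell)$, select some $\str P\in\cl P$ whose unfoldability degree is $\ge \ell$ and, inside it, locate either a long window of consecutive unfoldable edges (accounting for Example~(2)) or a single edge unfoldable of large degree (accounting for Example~(3)). In either case a combinatorial lemma (which I would prove first) shows that on this window the atomic edge types differ enough from their ``folded'' counterparts that any homomorphism from the window into a suitably tailored target must be injective on it and indeed traces a path of prescribed length in a designated subgraph. Build $\str B$ as the disjoint union of (i) an ``alternating sink'' that absorbs the foldable parts of $\str P$ on either side of the chosen window and (ii) a copy of $\str G$ suitably enriched with unary and possibly binary labels so that homomorphic images of the unfoldable window correspond exactly to length-$\ell$ walks from $s$ to $t$ in $\str G$. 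The injectivity forced by unfoldability upgrades such walks to simple paths (modulo standard coloring tricks on $\ell$), so embeddings of $\str P$ into $\str B$ correspond to short $s$-$t$-paths in $\str G$. Selecting $\str P$ from $\cl P$ and encoding the labels on $\str B$ are both computable in logarithmic space, yielding a pl-reduction.

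The main obstacle is the lower bound: extracting, from the mere numerical condition that $\sum_i d_i$ is unbounded, a structural feature of $\str P$ rigid enough to simulate $\ustcon$. The difficulty is that high unfoldability degree can be distributed across many edges of small degree (Example~(2)) or concentrated in one edge of huge degree (Example~(3)), and these induce very different local geometries. A Ramsey-style case analysis on how the profile $(d_i)_i$ can grow, combined with an argument that a long unfoldable window admits only ``straight'' homomorphic images into an appropriately labeled host, seems to be the crux. On the upper bound side, the delicate point is arranging the color-coding/Reingold interaction so that the oracle queries really fit the exact-length vs.\ long-path shape of $p$-$\lgsh$ and not some richer variant; this is where the two-alternative definition of $\lgsh$ plays an essential role and where the pl-Turing (as opposed to pl-many-one) flexibility is used.
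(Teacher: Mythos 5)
Your high-level architecture matches the paper's (reduce from $p$-$\ustcon$ for hardness; color coding plus Reingold plus $\lgsh$-oracle calls for the upper bound), but there are concrete gaps on both sides. On the hardness side, your case split is wrong as stated: unbounded unfoldability degree $\sum_i d_i$ yields either unboundedly many unfoldable edges or a single edge unfoldable of unbounded degree, but the first case does \emph{not} give a window of \emph{consecutive} unfoldable edges --- in Example (2), which you cite for exactly this case, no two unfoldable edges are adjacent, so an argument keyed to a consecutive window never applies to it. Moreover, your target structure (a ``disjoint union'' of an alternating sink and a labelled copy of $\str G$) cannot be built in general: $\str B$ must live over the same fixed vocabulary $\tau$ as $\cl P$, so you cannot freely add unary or binary labels, and since the image of $\str P$ is connected the two parts would have to be glued, not disjoint. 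The paper's device for both cases is a product structure on $G\times P$ together with a designated set $X$ of $\ell$ edges of $\str P$ at which movement along $E^{\str G}$ is permitted; unfoldability of the edges \emph{following} those of $X$ (many unfoldable edges) or of a single edge of degree $\ell$ (one very unfoldable edge) is then used to show that the second coordinate of any embedding is forced to be the identity. That forcing argument is the real content of the hardness proof, and your sketch explicitly defers it (``a Ramsey-style case analysis \dots seems to be the crux'').

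On the upper bound, splitting only at the at most $c$ unfoldable edges is not fine enough: inside a block between unfoldable edges the types $\atyp(e_i)$ can still strictly shrink (up to $t$ times, $t$ the number of atomic formulas in two variables), so the homogeneous segments usable for a single colored auxiliary graph are delimited by ``critical'' edges, and one needs a separate counting argument (the paper's Claim 5) showing these too are boundedly many. More importantly, you are missing the lemma that makes the ``long'' disjunct of $\lgsh$ usable: for the first position $a$ followed by $d$ consecutive non-unfoldable edges one has $\atyp(e_{a+i})\subseteq\atyp(e_a^{-i})$ for \emph{all} remaining $i$ (the paper's Claim 6), so a sufficiently long path in the auxiliary graph hosts the entire remainder of $\str P$; the algorithm first excludes such a long path with Reingold and only then can it read a ``yes'' from the $\lgsh$ oracle as asserting the exact-length disjunct. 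Without that absorption property the two-alternative oracle answer cannot be disambiguated. Finally, ``solving each block independently'' still requires a single global coloring to enforce injectivity across blocks, and the blocks are chained (each right endpoint is the next left endpoint), which is why the paper organizes the algorithm as a bounded-depth recursion on suffixes $\str P\uparrow i$ rather than as independent subproblems.
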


We divide the somewhat lengthy proof into lemmas proved in the next section.  Lemma~\ref{lem:hard} together with
 Theorems~\ref{thm:upperbound} and \ref{thm:ustcon} implies 
the first statement of Theorem~\ref{thm:main}. The second is proved as Lemma~\ref{lem:easy}.

\section{Proof of main theorem}

\subsection{Hardness}

\begin{lemma}\label{lem:hard}  Let $\cl P$ be a decidable class of rooted path structures. Assume $\cl P$
does not have bounded unfoldability degree.
Then there exists a pl-reduction from $p$-$\ustcon$ to $\pemb{\cl P}$.
\end{lemma}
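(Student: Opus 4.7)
The plan is to produce, given an instance $(\str G, s, t, \ell)$ of $p$-$\ustcon$, a pair $(\str A, \str B)$ that is a positive instance of $\pemb{\cl P}$ if and only if $\str G$ has a simple $s$-$t$-path of length at most $\ell$, with $|A|$ bounded by a computable function of $\ell$. The pattern $\str A$ is determined by $\ell$ alone: since $\cl P$ is decidable and has unbounded unfoldability degree, for every $\ell$ there is some $\str P_\ell\in\cl P$ of size at most $h(\ell)$ with unfoldability degree at least $\ell$; such a structure is located by exhaustive search in space depending only on $\ell$, and we set $\str A=\str P_\ell$.

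The first substantial step is a combinatorial preparation that converts the assumption $\sum_i d_i\to\infty$ into exploitable rigidity. Unboundedness may arise either from a single edge of very high degree or from many edges each of modest positive degree, so the preparation has to extract from $\str P_\ell$ a distinguished interval (or distinguished collection) of edges whose atomic types vary enough that, in any embedding, these edges force the walk to progress along a prescribed directional pattern rather than folding back on itself. A case split on these two modes is likely necessary, and this is the technical heart of the argument.

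The target $\str B$ is then assembled in two pieces glued at $t$. The first piece is a $\tau$-structure on vertex set $G$ in which every atomic type realized by a foldable (degree-$0$) edge of $\str P_\ell$ is realized on every undirected edge of $\str G$, with $\textit{root}^{\str B}=\{s\}$; intuitively, the foldable part of $\str P_\ell$ then embeds flexibly along any simple walk in $\str G$. The second piece is a gadget attached only at $t$ whose edges realize precisely the atomic types of the rigid interval extracted above, and these types appear nowhere else in $\str B$. Given a simple $s$-$t$-path in $\str G$ of length $m\le\ell$, one extends it through the gadget to a full embedding of $\str P_\ell$, relying on internal slack built into the gadget to absorb the variable length $m$. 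Conversely, any embedding of $\str P_\ell$ into $\str B$ sends $p_1$ to $s$ by preservation of $\textit{root}$; the rigid edges of $\str P_\ell$ can only be matched inside the gadget; and because the gadget is reachable only through $t$, the prefix of the embedding lies in the $\str G$-part and constitutes a simple walk from $s$ to $t$ of length at most $\ell$.

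The main obstacle is the combined design of the gadget and the preparatory combinatorial lemma. One must verify that the rigid interval's atomic types are realized only inside the gadget and not accidentally in the $\str G$-part or at the $t$-boundary, that the gadget admits embeddings for every admissible length of the $\str G$-walk, and that injectivity of the embedding is compatible with all of this. Handling the two modes of unboundedness within a single uniform construction, and extracting a rigid interval that cooperates with an attachable gadget in both modes, is where the real work lies. Once this is achieved, pl-computability is routine: $\str A$ and the gadget depend only on $\ell$ and can be precomputed in space depending only on $\ell$, while the $\str G$-dependent part of $\str B$ is a purely local operation using $O(\log|\str G|)$ space, and the parameter bound $|A|=|P_\ell|\le h(\ell)$ is immediate.
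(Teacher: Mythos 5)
Your high-level plan (precompute a pattern $\str P_\ell\in\cl P$ of large unfoldability degree by enumerating the decidable class, then build a target from $\str G$ so that unfoldability forces rigidity) matches the paper's opening moves, but the construction you sketch does not work, and the part you explicitly defer as ``the technical heart'' is precisely where the paper does something quite different. Two concrete problems with your architecture. First, a length-control problem: since an embedding is injective, the foldable prefix of $\str P_\ell$ that you place in the $G$-part must trace a \emph{simple} path in $\str G$ whose length equals the length of that prefix --- a number fixed by $\str P_\ell$, not by the instance. So your reduction can only test for an $s$-$t$ path of one specific length, whereas $p$-$\ustcon$ asks for length \emph{at most} $\ell$; the slack cannot be absorbed by a gadget hanging off $t$, because the slack has to occur before the walk reaches $t$. (This is especially acute in the mode where $\str P_\ell$ has $\ell$ unfoldable edges scattered along a long path with long foldable stretches between them.) Second, a type-separation problem: unfoldability of $e_i$ only says $\atyp(e_i)\not\subseteq\atyp(e_{i-1}^{-1})$; it does not make the types of your ``rigid'' edges globally distinct from those of foldable edges, so if the $G$-part realizes every foldable type on every edge of $\str G$ there is in general no way to keep the rigid interval from embedding into the $G$-part. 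You flag both issues as things ``one must verify,'' but for your construction they are not verifiable in general.

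The paper avoids both problems by taking the universe of $\str B$ to be the product $G\times P$, interpreting each relation so that a tuple sitting over the edge $e_i$ of $\str P$ copies $\atyp(e_i)$ exactly and either keeps the $G$-coordinate fixed or, only when $e_i$ belongs to a designated set $X$ of $\ell$ edges, moves it along an edge of $\str G$. The ``stay put'' option is what encodes ``length at most $\ell$,'' and the layering makes the projection $\pi_2$ onto $P$ an endomorphism of $\str P$ and of its Gaifman graph. The set $X$ is chosen as the edges immediately preceding unfoldable ones (when there are $\ell$ unfoldable edges) or as the $\ell$ edges preceding a single edge unfoldable of degree $\ell$; unfoldability is then used to show that $\pi_2\circ h$ must be the identity for any embedding $h$, which converts $h$ into an $s$-$t$ path of length at most $|X|=\ell$. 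You would need to import essentially all of this machinery to close the gaps in your sketch.
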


\begin{proof} 
Given an instance 
$(\str G,s,t,\ell)$ of $\ustcon$ 
these reductions 
will proceed in two stages according to the characterization of parameterized logarithmic space as so-called {\em logarithmic 
space after a pre-computation}~\cite{FlumGrohe06-parameterizedcomplexity}. 
In the so-called {\em pre-computation} stage, the input parameter $\ell$ is mapped by a computable function to a 
pair $(\str P,X)$ such that $\str P\in\cl P$ and $X$ is a set of $\ell$ many edges of~$\str P$. The space required can be bounded by a computable function of $\ell$. In our case, this first computation exploits special properties of $\cl P$. 

In the second stage, the output $(\str P,\str B)$, an instance 
of $\pemb{\cl P}$ is produced.  This computation is a logarithmic space computation that takes as input $(\str G,s,t,\ell)$ plus
$(\str P,X)$. In our case, this computation is not going to depend on special properties of $\cl P$. 

Recall that $P$ denotes the universe of $\str P$ and $G$ the universe of $\str G$. The $\tau$-structure
$\str B=\str B(\str G,\str P,X,s,t)$
has universe
$B:= G\times P$.
Let $\pi_1$ and $\pi_2$ denote the projections mapping an ordered pair $(g,p)\in B$ to its first resp. second component. 
Roughly speaking, 
we construct $\str B$ in such a way that an embedding $h$ from $\str P$ into $\str B$ 
such that $h^*:=\pi_2\circ h$ is the identity on $P$
yields a path of length at most $|X|=\ell$ in~$\str G$, and also vice-versa (this is formalized in Claims 1 and 3 below).  
We do this by ensuring that, whenever such $h$ 
maps an edge $e$ of $\str P$ to $(b,b')\in B^2$ then the first components of $(b,b')$ are either equal or transverse an edge of $\str G$; 
the latter is ensured to happen only if $e\in X$.
 Proving correctness of our reduction will then amount 
to selecting an appropriate $X$ such that we can prove that $h^*$ is the identity on $P$.

We need some notation. 
Let $k:=|P|$ and let $p_1,\ldots, p_k$ be the enumeration of $\str P$. Recall we write
$e_i$ for the edge $(p_i,p_{i+1})$. Let
$1\le i_1<\cdots < i_\ell<k$ be such that $X=\{e_{i_1},\ldots, e_{i_\ell}\}$.

The structure
 $\str B$ 
interprets the unary relation symbol $\textit{root}\in\tau$ by 
$\textit{root}^{\str B}:=\{(s,p_1)\}$.
To define the  interpretation $R^{\str B}$ of an $r$-ary relation symbol $R\in \tau$ we  
describe an algorithm $\mathbb A$ that given an $r$-tuple $((g_1,q_1),\ldots, (g_r,q_r))\in B^r$ along with $\str G,\str P,X,s,t$ decides whether 
$((g_1,q_1),\ldots, (g_r,q_r))\in R^{\str B}$: 

\begin{enumerate}
\item check that $(q_1,\ldots, q_r)\in R^{\str P}$;
\item compute $i\in[k-1]$ such that every $q_j,j\in[r],$ equals  $p_i$ or $p_{i+1}$;
\item check that there are  $g,g'\in G$ such that every $(g_j,q_j),j\in[r],$  equals $(g,p_i)$ or $(g',p_{i+1})$;
\item if $i+1=k$, then check that $g'=t$;
\item if $i=i_1$, then check that $g=s$;
\item if $e_i\notin X$, then check that $g=g'$;
\item if $e_i\in X$, then  check that $g=g'$ or $(g,g')\in E^{\str G}$;
\item accept.
\end{enumerate}

We understand that the computation is aborted and the algorithm rejects in case one of the 
checks fails. In particular, line 2 is entered only in case $(q_1,\ldots, q_r)\in R^{\str P}$; then the Gaifman graph
$\str G(\str P)$ of $\str P$ contains an edge between any two different components $q_j$'s; since $\str G(\str P)$ is a path, 
the $i$ asked for in line 2 is well-defined. \medskip

The following figure illustrates the construction. Consider the rooted path structure $\str P$ with universe $P=\{p_1,\ldots,p_8\}$ that interprets $\textit{root}$ by $p_1$ (depicted by the filled node) and a binary relation symbol $R$ by the depicted arrows; further consider a directed graph $\str G$ with vertices $G=\{s,t,g,g'\}$ and directed edges again depicted by arrows:\\

\begin{tikzpicture}

\tikzset{vertex/.style = {shape=circle,draw}}
\tikzset{edge/.style = {->,> = latex'}}


\node (p) at  (0,3.5) {$\str P$};
\node[vertex,fill] (p1) at  (1.5,3.5) {$p_1$};
\node[vertex] (p2) at  (3,3.5) {$p_2$};
\node[vertex] (p3) at  (4.5,3.5) {$p_3$};
\node[vertex] (p4) at  (6,3.5) {$p_4$};
\node[vertex] (p5) at  (7.5,3.5) {$p_5$};
\node[vertex] (p6) at  (9,3.5) {$p_6$};
\node[vertex] (p7) at  (10.5,3.5) {$p_7$};
\node[vertex] (p8) at  (12,3.5) {$p_8$};

\draw[edge] (p1) to (p2);
\draw[edge] (p2) to (p3);
\draw[edge] (p3) to (p4);
\draw[edge] (p5) to (p4);
\draw[edge] (p6) to (p5);
\draw[edge] (p7) to (p6);
\draw[edge] (p8) to (p7);


\node (p) at  (0,2) {$\str G$};
\node[vertex] (s) at (1.5,2){$s$};
\node[vertex] (g) at (3.5,2){$g$};
\node[vertex] (g') at (1.5,0){$g'$};
\node[vertex] (t) at (3.5,0){$t$};

\draw[edge] (s) to (g');
\draw[edge] (g) to (s);
\draw[edge] (g') to (t);
\draw[edge] (g) to (t);

\end{tikzpicture}

The unfoldable edges in $\str P$ are $e_2,e_3,e_5,e_6,e_7$. The $\{\textit{root},R\}$-structure $\str B=\str B(\str G,\str P,\{e_2,e_6\},s,t)$ exemplifies the constuction in case 1 below. It has universe $G\times P$ and looks as follows. We draw a matrix of nodes with rows indexed by $G$ and columns indexed by $P$. The interpretation of $\textit{root}$ is $\{(s,p_1)\}$ and indicated by a filled node. The interpretation of $R$ is by the arrows depicted.\\

\begin{tikzpicture}

\tikzset{vertex/.style = {shape=circle,draw}}
\tikzset{edge/.style = {->,> = latex'}}


\node (b) at (0,6) {$\str B$};
\node (p1) at  (1.5,6) {$p_1$};
\node (p2) at  (3,6) {$p_2$};
\node (p3) at  (4.5,6) {$p_3$};
\node (p4) at  (6,6) {$p_4$};
\node (p5) at  (7.5,6) {$p_5$};
\node (p6) at  (9,6) {$p_6$};
\node (p7) at  (10.5,6) {$p_7$};
\node (p8) at  (12,6) {$p_8$};
\node (s) at  (0,4.5) {$s$};
\node (g) at  (0,3) {$g$};
\node (g') at  (0,1.5) {$g'$};
\node (t) at  (0,0) {$t$};


\node[vertex,fill] (s1) at  (1.5,4.5) {};
\node[vertex] (g1) at  (1.5,3) {};
\node[vertex] (g'1) at  (1.5,1.5) {};
\node[vertex] (t1) at  (1.5,0) {};
\node[vertex] (s2) at  (3,4.5) {};
\node[vertex] (g2) at  (3,3) {};
\node[vertex] (g'2) at  (3,1.5) {};
\node[vertex] (t2) at  (3,0) {};
\node[vertex] (s3) at  (4.5,4.5) {};
\node[vertex] (g3) at  (4.5,3) {};
\node[vertex] (g'3) at  (4.5,1.5) {};
\node[vertex] (t3) at  (4.5,0) {};
\node[vertex] (s4) at  (6,4.5) {};
\node[vertex] (g4) at  (6,3) {};
\node[vertex] (g'4) at  (6,1.5) {};
\node[vertex] (t4) at  (6,0) {};
\node[vertex] (s5) at  (7.5,4.5) {};
\node[vertex] (g5) at  (7.5,3) {};
\node[vertex] (g'5) at  (7.5,1.5) {};
\node[vertex] (t5) at  (7.5,0) {};
\node[vertex] (s6) at  (9,4.5) {};
\node[vertex] (g6) at  (9,3) {};
\node[vertex] (g'6) at  (9,1.5) {};
\node[vertex] (t6) at  (9,0) {};
\node[vertex] (s7) at  (10.5,4.5) {};
\node[vertex] (g7) at  (10.5,3) {};
\node[vertex] (g'7) at  (10.5,1.5) {};
\node[vertex] (t7) at  (10.5,0) {};
\node[vertex] (s8) at  (12,4.5) {};
\node[vertex] (g8) at  (12,3) {};
\node[vertex] (g'8) at  (12,1.5) {};
\node[vertex] (t8) at  (12,0) {};


\draw[edge] (s1) to (s2);
\draw[edge] (s2) to (s3);
\draw[edge] (s3) to (s4);
\draw[edge] (s5) to (s4);
\draw[edge] (s6) to (s5);
\draw[edge] (s7) to (s6);

\draw[edge] (g1) to (g2);
\draw[edge] (g3) to (g4);
\draw[edge] (g5) to (g4);
\draw[edge] (g6) to (g5);
\draw[edge] (g7) to (g6);

\draw[edge] (g'1) to (g'2);
\draw[edge] (g'3) to (g'4);
\draw[edge] (g'5) to (g'4);
\draw[edge] (g'6) to (g'5);
\draw[edge] (g'7) to (g'6);

\draw[edge] (t1) to (t2);
\draw[edge] (t3) to (t4);
\draw[edge] (t5) to (t4);
\draw[edge] (t6) to (t5);
\draw[edge] (t7) to (t6);
\draw[edge] (t8) to (t7);

\draw[edge] (s2) to (g'3);

\draw[edge] (g'7) to (s6);
\draw[edge] (t7) to (g'6);
\draw[edge] (s7) to (g6);
\draw[edge] (t7) to (g6);

\end{tikzpicture}

\hspace*{2cm}

Continuing with the proof, we make four observations concerning the structure $\str B=\str B(\str G,\str P,X,s,t)$. 

\medskip

\noindent{\em Claim 1:}  If 
$(\str G,s,t,\ell)\in\ustcon$, then $(\str P,\str B)\in\emb{\cl P}$.\medskip

{\em Proof of Claim 1.} 
Assume that $s=g_1,\ldots, g_{\ell'+1}=t$ is (an enumeration of) a length $\ell'\le \ell$ path 
connecting $s$ and $t$ in $\str G$. Define
the sequence $\tilde g_1,\ldots, \tilde g_k$ in $G$ as follows. The first $i_1$ many members $\tilde g_1,\ldots, \tilde g_{i_1}$ 
equal $g_1=s$, the next $i_2-i_1$ members $\tilde g_{i_1+1},\ldots, \tilde g_{i_2}$ equal $g_2$, and so on, the last
$k-i_{\ell'}$ many members $\tilde g_{i_{\ell'}+1},\ldots, \tilde g_k$ equal $g_{\ell'+1}=t$.
Then
\[
(s,p_1)=(\tilde g_1,p_1),(\tilde g_2,p_2), \ldots, (\tilde g_k,p_k)=(t,p_k)
\]
is an enumeration of a copy of $\str P$ in $\str B$, i.e. $p_i\mapsto (\tilde g_i,p_i)$ is an embedding from $\str P$ into $\str B$.
\hfill$\dashv$\medskip

Recall $\str G(\str B)$ denotes the Gaifman graph of $\str B$.\medskip

\noindent{\em Claim 2:} 
Let
$
\refl(E^{\str G}):=E^{\str G}\cup\{(g,g)\mid g\in G\}
$
be the reflexive closure of $E^{\str G}$.
Then
\begin{eqnarray}\label{eq:hom1}
&& \pi_1\text{ is a homomorphism from $\str G(\str B)$ into $(G,\refl(E^{\str G}))$};\\\label{eq:hom12}
&&\text{for all }e\in E^{\str G(\str B)}: \text{ if }\pi_1(e)\in E^{\str G}, \text{ then } \pi_2(e)\in X   \text{ or } \pi_2(e)^{-1}\in X;  \\\label{eq:hom2}
&&\pi_2\text{ is a homomorphism from $\str B$ into $\str P$}.
\end{eqnarray}

\noindent{\em Proof of Claim 2.}
Let $e=((g,p),(g',p'))\in E^{\str G(\str B)}$.
Then
there is $R\in \tau$ and $\bar b\in R^{\str B}$ such that
$(g,p),(g',p')$ both appear in $\bar b$. Since $\mathbb A$ accepts $\bar b$,  by line 2 
there is $i\in[k-1]$ such that $p=p_i,p'=p_{i+1}$ or vice-versa. As one of 
of the checks in line 6 or 7 is carried out, we have $(g,g')\in \refl(E^{\str G})$.
Further, in case $\pi_1(e)=(g,g')\in E^{\str G}$ we have $g\neq g'$, so by line 6 we must then have $e_i\in X$, implying that $\pi_2(e)\in X$ or
$\pi_2(e)^{-1}\in X$.
This shows \eqref{eq:hom1} and \eqref{eq:hom12}. 

Statement \eqref{eq:hom2} is clear: if $\bar b\in R^{\str B}$, then 
$\mathbb A$ accepts $\bar b$, so
$\pi_2(\bar b)\in R^{\str P}$ by line 1.
\hfill$\dashv$\medskip

\noindent{\em Claim 3:}
Assume $h$ is an embedding from $\str P$ into $\str B$. Then $h^*:=\pi_2\circ h$ is an endomorphism of $\str P$; if $h^*$ is trivial, then
$(\str G,s,t,\ell)\in \ustcon$.\medskip

\noindent{\em Proof of Claim 3.} 
The first statement follows from \eqref{eq:hom2}. Assume $h^*$ is trivial, that is, $h^*(p_i)=p_i$ for all $i\in[k]$. 
For $i\in[k]$ let $g_i\in G$ be such that $h(p_i)=(g_i,p_i)$. By \eqref{eq:hom1}, $g_1,\ldots,g_k$ satisfies 
$\pi_1(h(e_i))=(g_i,g_{i+1})\in\refl(E^{\str G})$ for all $i\in [k-1]$.
By \eqref{eq:hom12}, $(g_i,g_{i+1})\in E^{\str G}$ only if $\pi_2(h(e_i))=e_i\in X$. Hence the sequence $g_1,\ldots, g_k$ witnesses that $g_1$ and $g_k$ are connected by a path of length at most $|X|=\ell$ in $\str G$. We are left 
to show $g_1=s$ and $g_k=t$. The former holds as $h(p_1)=(g_1,p_1)\in \textit{root}^{\str B}=\{(s,p_1)\}$. To see
$g_k=t$ note $h(e_{k-1})\in\str G(\str B)$ since $h$ is an embedding. Hence, there are $R\in \tau$ and $\bar b\in R^{\str B}$ with $(g_{k-1},p_{k-1}),(g_k,p_k)$ appearing 
in $\bar b$. Since $\mathbb A$ accepts $\bar b$ we have $g_k=t$ by line 4.
\hfill$\dashv$\medskip

Note that an endomorphism of $\str P$ may fail to be an endomorphism of its Gaifman graph $\str G(\str P)$, for example, $\str P$ could have a constant endomorphism mapping each point $p_i$ to the root $p_1$. However, for endomorphisms of the form as in Claim 3, this can not happen: \medskip

\noindent{\em Claim 4:}
Assume $h$ is an embedding from $\str P$ into $\str B$. Then $h^*:=\pi_2\circ h$ is an endomorphism of $\str G(\str P)$, that is,
for all $i\in[k-1]$ there is $j\in[k-1]$ such that $h^*(e_i)\in\{e_{j},e_j^{-1}\}$.
\medskip

\noindent{\em Proof of Claim 4.} 
By definition of $\mathbb A $ we have
for all $g,g'\in G$ and $ j,j'\in[k]$:
\begin{equation}\label{eq:grid}
\textup{if } ((g,p_j),(g',p_{j'}))\in \str G(\str B), \textup{then } |j-j'|=1.
\end{equation}

For $i\in[k-1]$ choose $g,g',j,j'$ such that $h(e_i)=((g,p_j),(g',p_{j'}))$. 
Note $h$ is also an embedding from $\str G(\str P)$ into $\str G(\str B)$. 
Hence $h(e_i)\in \str G(\str B)$, so $|j-j'|=1$ by \eqref{eq:grid}, i.e. $h^*(e_i)=(p_j,p_{j'})\in\{e_j,e_j^{-1}\}$.
\hfill$\dashv$\medskip

We now exhibit a pl-reduction from $p$-$\ustcon$ to $\pemb{\cl P}$ assuming $\cl P$ does not have bounded unfoldability degree. The assumption implies that $\cl P$ has at least one of the following properties:
\begin{enumerate}\itemsep=0pt
\item[] (Case 1) For every $\ell\in\mathbb N$ there exists $\str P\in\cl P$ such that 
at least $\ell$ many edges of $\str P$ are unfoldable.
\item[] (Case 2) For  every $\ell\in\mathbb N$ there exists $\str P\in\cl P$ such that at least one edge of $\str P$ is unfoldable of degree $\ell$.
\end{enumerate}
We exhibit a reduction as desired in both cases.

\subparagraph*{Case 1} 
Given an 
instance $(\str G,s,t,\ell)$ of $p$-$\ustcon$ the reduction first computes $\str P\in \cl P$ such that $\str P$ contains at least $\ell$ many unfoldable edges.  This can be done by computably enumerating the decidable class~$\cl P$ and testing for each structure output by the enumeration whether it has at least $\ell$ many unfoldable edges or not; the first structure passing the test is $\str P$.

Recall $P$ denotes the universe of $\str P$. Again we write $k:=|P|$ and let $p_1,\ldots,p_k$ be the enumeration of $\str P$. From the input
$\str P$ alone, the reduction computes $\ell$ many unfoldable edges $e_{i_1+1},\ldots, e_{i_\ell+1}$ with $1\le i_1<\cdots < i_\ell<k $. 
Then it computes $X=\{e_{i_1},\ldots,e_{i_\ell}\}$, the set of edges immediately preceeding the~$\ell$ unfoldable ones, 
and outputs $(\str P,\str B)$ for 
$\str B=\str B(\str G,\str P,X,s,t)$. 

We have to show that
\begin{equation}\label{eq:red}
(G,s,t,\ell)\in\ustcon\iff (\str P,\str B)\in \emb{\cl P}.
\end{equation}
Then we are done: as has already been observed, the algorithm $\mathbb A$ runs in logarithmic space, so our reduction
$(G,s,t,\ell)\mapsto(\str P,\str B)$ can be computed in space $f(\ell)+O(\log |G|)$ for some 
computable  $f\colon\mathbb N\to \mathbb N$. The output parameter $k=|P|$ is bounded by (in fact, equal to) a 
computable function of the input parameter~$\ell$. Thus, $(G,s,t,\ell)\mapsto(\str P,\str B)$ defines a pl-reduction from $p$-$\ustcon$ to $\pemb{\cl P}$.

We verify \eqref{eq:red}. The forward direction follows from Claim 1.
Conversely, let $h\colon P\to B$ be an embedding from $\str P$ into $\str B$. By Claim 3
it suffices to show that the endomorphism $h^*:=\pi_2\circ h$ is trivial.

Suppose not and choose the minimal $i\in[k]$ such that 
$h^*(p_i)\neq p_i$. 
Since $h^*$ is an endomorphism (Claim 3) and $\textit{root}^{\str P}=\{p_1\}$ we have $h^*(p_1)=p_1$, so $i>1$.
Since $h^*(e_1)$ is an edge of $\str G(\str P)$  (Claim 4), we have $h^*(e_1)=e_1$, so $i>2$.
Since $h^*(p_{i-1})=p_{i-1}$ and $h^*(e_{i-1})$ is an edge of $\str G(\str P)$ (Claim 4),
we have $h^*(e_{i-1})=e_i$ or $h^*(e_{i-1})=e^{-1}_{i-2}$. As $h^*(p_i)\neq p_i$, we have $h^*(e_{i-1})=e^{-1}_{i-2}$, that is,
\[
h(e_{i-1})=((g,p_{i-1}),(g',p_{i-2}))
\] 
for certain $g,g'\in G$. Further, $\atyp(e_{i-1})\subseteq  \atyp(e_{i-2}^{-1})$ since $h^*$ is an endomorphism of $\str P$.
Thus, $e_{i-1}$ is not unfoldable and therefore $e_{i-2}\notin X$. Since $e_{i-1}\in \str G(\str P)$ there 
exists $R\in \tau$ and a tuple $\bar q\in R^{\str P} $ such that both $p_{i-1}$ and $p_{i}$ 
appear in $\bar q$. The image $h(\bar q)$  contains $(g,p_{i-1})$ and $(g',p_{i-2})$. 
Since $h(\bar q)\in R^{\str B}$ and $e_{i-2}\notin X$, algorithm  $\mathbb A$ accepts carrying out the check in line 6, so $g=g'$.

For the $g''\in G$ such that $h(p_{i-2})=(g'',p_{i-2})$, we have $h(e_{i-2})=((g'',p_{i-2}),(g,p_{i-1}))$. As above we see
that also $(g'',p_{i-2}),(g,p_{i-1})$ appear in some tuple in some relation from $\str B$, and hence $g''=g$.

Thus, $h(p_i)=h(p_{i-2})=(g,p_{i-2})$ and $h$ is not injective, a contradiction.

\subparagraph*{Case 2} In this case the 
reduction maps  an instance $(\str G,s,t,\ell)$ to $(\str P,\str B)$ for $\str B=\str B(\str G,\str P,X,s,t)$
where $\str P\in\cl P$ has an edge that is unfoldable of degree 
 $\ell$ and $X$ is the set of $\ell$ many  edges preceeding this edge.
More precisely, for $k:=|P|$ let $p_1,\ldots,p_k$ be the enumeration of $\str P$ and let $e_i=(p_i,p_{i+1})$ 
be unfoldable of degree $\ell$; then  $X$ is $\{e_{i-1},\ldots, e_{i-\ell}\}$.
Such $(\str P,X)$ can be computed from $\ell$.
We have to show:
\begin{equation}\label{eq:red2}
(G,s,t,\ell)\in\ustcon \Longleftrightarrow (\str P,\str B)\in \emb{\cl P}.
\end{equation}

By Claim 1 it suffices to prove the backward direction in  \eqref{eq:red2}.
So assume $h$ is an embedding from $\str P$ into~$\str B$. By Claim 3 it suffices to show that the endomorphism 
$h^*:=\pi_2\circ h$ is trivial.

We first show that $h^*(p_j)=p_j$ for all $j\in [i-\ell]$ and in fact $h(p_j)=(s,p_j)$. 
Since $h(p_1)\in \textit{root}^{\str B}=\{(s,p_1)\}$ this holds for $j=1$. Inductively, assuming 
$1\le j<i-\ell$ and $h(p_j)=(s,p_j)$ we show $g=s,q=p_{i+1}$ for $g,q$ such that $h(p_{j+1})=(g,q)$. 

Since $h$ is an embedding, $h(e_j)\in \str G(\str B)$, so $\pi_1(h(e_j))\in\refl(E^{\str G})$ by \eqref{eq:hom1}. 
Neither $(p_j,q)$ nor $(q,p_{j})$ are in $ X$ since $j<i-\ell$.
By \eqref{eq:hom12} we have
$\pi_1(h(e_j))\in\refl(E^{\str G})\setminus E^{\str G}$, i.e.\ $s=g$.
Now,  $h^*(e_j)=(p_j,q)$ is an edge of $\str G(\str P)$ (Claim 4), so equals $e_{j}$ or $e_{j-1}^{-1}$. In the first case, $q=p_{j+1}$ 
and we are done. The other case is impossible:  it implies $h(p_{j+1})=(s,p_{j-1})=h(p_{j-1})$, contradicting  
the injectivity of~$h$.

Each of the $\ell$ many points   $p_{i-\ell+1},\ldots , p_{i}$ is mapped by $h^*$
to some $p_{i-\ell+j}$ with $j\ge 1$.
Otherwise, there is $j\in[\ell-1]$ such that $h^*(p_{i-\ell+j})=p_{i-\ell+1}$ and
$h^*(p_{i-\ell+j+1})=p_{i-\ell}$, that is, $h^*(e_{i-\ell+j})=e_{i-\ell}^{-1}$.
Write  $h(e_{i-\ell+j})=((g,p_{i-\ell+1}),(g',p_{i-\ell}))$ for certain $g,g'\in G$.
Since this is an edge  in $\str G(\str B)$
there are $R\in\tau$ and $\bar b\in R^{\str B}$ such that $(g,p_{i-\ell+1}),(g',p_{i-\ell})$ appear in $\bar b$. Since $\mathbb A$ accepts
$\bar b$, it follows from line 5 that $g'=s$ (note $i-\ell$ is the smallest index of an edge in $X$). 
Then $h(p_{i-\ell+j+1})=(s,p_{i-\ell})=h(p_{i-\ell})$, contradicting the injectivity of $h$.

In particular, $h^*(p_{i-\ell+1})=p_{i-\ell+1}$ as otherwise $h^*(e_{i-\ell})=(p_{i-\ell},p_{i-\ell+j})$ for some $j> 1$ would not 
be an edge of $\str G(\str P)$ (contradicting Claim 4).

Observe that in the graph $\str G(\str B)$ each vertex  $(g,p_{i-\ell+j})$ with $j\ge 1$ has only 
neighbors in $G\times \{p_{i-\ell+j+1},p_{i-\ell+j-1}\}$ (cf.~\eqref{eq:grid}).
This implies that in the length $\ell$ sequence of points $h^*(p_{i-\ell+1}),\ldots, h^*(p_{i})$, the index increases or decreases 
by 1 in every step. 
It implies further that for none of these points $h^*$ changes the parity of the index -- more precisely:
if $\nu,\mu$ are indices $>i-\ell$ and $\le i$ such that $h^*(p_\mu)=p_\nu$ then the parities of $\mu$ and $\nu$ are equal. Indeed, 
this follows easily by induction on $\mu$ with $i-\ell<\mu\le i$: for the base case $\mu=i-\ell+1$ note $h^*(p_{i-\ell+1})=p_{i-\ell+1}$; 
the induction step follows from the previous observation that if $\mu$ increases by 1, then the index of $h^*(p_\mu)$ increases or decreases by 1, so changes  its parity.

We next show that the index increases by 1 in every step, that is,
$h^*(p_{i-\ell+j})=p_{i-\ell+j}$ for all $j\in[\ell]$.
Otherwise, $h^*(p_i)=p_{\nu}$ for 
some  $i-\ell<\nu <i$. Then $h^*(e_{i})$ equals $e_{\nu}$ or $e^{-1}_{\nu-1}$. We show both is impossible.
Since the parity of $\nu$ equals that of $i$ we have that $d:=i-\nu$ is even. Thus, by the assumption that $e_i$ is unfoldable of degree $\ell$,
\begin{eqnarray*}
&&\atyp(e_i)\not\subseteq\atyp(e_{i-d}^{-d})=\atyp(e_\nu),\textup{ and}\\
&&\atyp(e_i)\not\subseteq\atyp(e_{i-(d+1)}^{-(d+1)})=\atyp(e_{\nu-1}^{-1}).
\end{eqnarray*}
 
We now know $h^*(p_j)=p_j$ for all $j\in[i]$. 
Then $h^*(p_{i+1})=p_{i+1}$: otherwise, by Claim 4, $h^*(p_{i+1})=p_{i-1}$, so
$h^*(e_i)=e_{i-1}^{-1}$ and $\atyp(e_i)\subseteq \atyp(e_{i-1}^{-1})$ contradicting the unfoldability of $e_i$. It now follows that 
$h^*(p_j)=p_j$ for all $i<j\le k$, and indeed $h(p_j)=(g,p_j)$ for $g:=\pi_1(h(p_{i+1}))$. This is verified similarly 
as for the points $p_j$ with $j\in[i-\ell]$.
\end{proof}

\subsection{Upper bound} 


\begin{lemma}\label{lem:easy} Let $\cl P$ be a decidable class of rooted path structures. 
Suppose $\cl P$ has bounded unfoldability degree.
Then there is a pl-Turing reduction from
$\pemb{\cl P}$ to $p$-$\lgsh$.
\end{lemma}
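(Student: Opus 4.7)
The plan is to combine derandomized color coding with a block decomposition of $\str P\in\cl P$ along its unfoldable edges, and then to show that each block contributes a single query to $p$-$\lgsh$. First, given $(\str P,\str B)$ with $k:=|P|$, I would preprocess $\str P$ in pl-space by enumerating $\cl P$ and computing its enumeration $p_1,\ldots,p_k$ together with the list of all unfoldable edges and their exact unfoldability degrees. The bounded unfoldability hypothesis yields a constant $c$ depending only on $\cl P$ such that there are at most $c$ unfoldable edges, each of degree at most $c$; these at most $c$ edges partition $[k]$ into a constant-size family of \emph{foldable blocks}. Second, I would apply a pl-computable derandomized color coding procedure to enumerate a family of colorings $\chi\colon B\to[k]$ such that, if any embedding of $\str P$ into $\str B$ exists, some $\chi$ on the list admits an embedding which is \emph{colorful}, i.e.\ satisfies $\chi(h(p_i))=i$. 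Fixing $\chi$ turns the task into a layered type-realisation problem whose layers are the color classes $\chi^{-1}(i)\subseteq B$; colorfulness takes care of injectivity.

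For each $\chi$, I would then enumerate, in pl-space, all possible images in $\str B$ of the $O(c)$ \emph{anchor} vertices---the root $p_1$ and the endpoints of the unfoldable edges, respecting the coloring. Each such choice of anchors decouples the overall task into a conjunction of independent subproblems, one per foldable block, of the following form: given prescribed anchor images $b_s\in\chi^{-1}(s)$ and $b_t\in\chi^{-1}(t)$ at the block boundaries (or only $b_s$ if the block is the terminal one), decide whether there exist $b_{s+1}\in\chi^{-1}(s+1),\ldots,b_{t-1}\in\chi^{-1}(t-1)$ with each pair $(b_i,b_{i+1})$ realising the required atomic type $\atyp(e_i)$ in $\str B$.

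The crucial step is to translate each such subproblem into a single call to $p$-$\lgsh$. Inside a foldable block every interior edge satisfies $\atyp(e_{i+1})\subseteq\atyp(e_i^{-1})$, so the legal moves between consecutive layers coincide, up to reversal, with those between the previous pair of layers. Exploiting this symmetry I would collapse the layered directed compatibility multigraph of a block to a single undirected \emph{type-compatibility graph} on a restriction of $B$, in which the block subproblem becomes the existence of a walk of the appropriate length from the prescribed endpoint(s). For an interior block both endpoints are pinned and the required walk has exact length $t-s$; for the terminal block only the starting point is pinned and one needs a walk of length at least the number of remaining edges of $\str P$---precisely the two alternatives that $p$-$\lgsh$ offers in a single query. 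A suitable disjoint-union amalgamation of the per-block auxiliary graphs, together with Reingold's algorithm invoked internally to compute connectivity data in pl-space, yields one pl-computable query per $(\chi,\text{anchor})$ choice with parameter bounded by a computable function of $k$. Accept iff some coloring and some anchor choice produces a yes answer; the total oracle access is bounded and the rest of the computation is pl-space, so the procedure is a pl-Turing reduction from $\pemb{\cl P}$ to $p$-$\lgsh$.

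The main obstacle I anticipate is the collapse of the block subproblem to an undirected walk problem. One must deal carefully with the two-step periodicity induced by folding---and hence with the parity of path lengths---, with the possibility that a block has length one or that consecutive anchors coincide, and with the need for $p$-$\lgsh$ to simultaneously encode all interior (exact-length) and terminal (long-path) block queries into a single oracle call. It is precisely this interaction that motivates the peculiar disjunctive form of $\lgsh$.
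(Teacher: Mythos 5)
Your high-level ingredients (derandomized color coding, a constant-size decomposition of $\str P$, Reingold, and one $\lgsh$ query per piece) are the right ones, but there are two concrete gaps that the paper's proof is specifically engineered to avoid. First, you decompose along \emph{unfoldable} edges and assert that inside a block ``the legal moves between consecutive layers coincide, up to reversal.'' They do not: non-unfoldability only gives $\atyp(e_{i+1})\subseteq\atyp(e_i^{-1})$, and when this inclusion is strict the set of pairs of $\str B$ realising $\atyp(e_{i+1})$ is strictly larger than the (reversed) set realising $\atyp(e_i)$, so the layered compatibility graph is not uniform across a block and cannot be collapsed to a single undirected graph in which the subproblem is an exact-length walk. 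The paper introduces the finer notion of a \emph{critical} edge ($\atyp(e_i)\neq\atyp(e_{i-1}^{-1})$), proves separately (Claim 5) that bounded unfoldability degree still bounds the number of critical edges (each non-unfoldable critical edge strictly shrinks a type, which can happen only as often as there are atomic formulas in two variables), and only between consecutive critical edges does it get the exact type equality $\atyp(e_j)=\atyp(e_2^{-j})$ needed for your collapse. Your recursion/decomposition must therefore be driven by critical edges, and the bound on their number is a lemma you have not supplied.

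Second, your plan to amalgamate all block subproblems into a single $\lgsh$ query via disjoint union cannot work with the disjunctive semantics of $\lgsh$: a ``yes'' answer does not tell you which disjunct holds, so it cannot certify a conjunction of per-block exact-length conditions, and a long path inside one interior block's compatibility graph does not by itself witness anything about the later blocks. The paper resolves this with two ideas you are missing. It chooses the cut point $a$ as the start of a stretch of $d$ consecutive non-unfoldable edges, which yields the global domination property $\atyp(e_{a+i})\subseteq\atyp(e_a^{-i})$ for \emph{all} $i\le k-a$, including across later unfoldable edges (Claim 6); consequently a long ($\ge k-2$) path in the single compatibility graph already yields a complete embedding of the entire remaining structure (Claim 7). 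It then first tests for that long path directly with Reingold's algorithm (no oracle needed, via the color-$k$ reformulation of Claim 8), and only when that test fails does it issue one $\lgsh$ query, whose ``yes'' answer can now be unambiguously read as an exact-length path to the next critical edge, after which it recurses on the suffix $\str P\uparrow i$ with a freshly rooted and restricted $\str B'$. Without the choice of $a$ and Claim 6, the long-path branch of your oracle answers is uninterpretable, and without the sequential rule-out-then-query structure the exact-length branch is too.
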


The proof uses color-coding, namely, we shall rely on the following lemma \cite[page~349]{FlumGrohe06-parameterizedcomplexity}.

\begin{lemma}\label{lem:cc}
For every sufficiently large $n\in \mathbb N$, it holds that for all $k\le n$ and
for every $k$-element subset $X$ of $[n]$, there exists a prime $p <
k^2\cdot \log n$ and $q< p$ such that the function $h_{p,q}: [n]\to \{0,
\ldots, k^2 -1\}$ given by 
$
h_{p,q}(m):= (q\cdot m \mod p) \mod
k^2
$ 
is injective on $X$.
\end{lemma}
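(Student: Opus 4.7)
The plan is to carry out the classical Fredman--Komlós--Szemerédi derandomization, separating the search for a witness $(p,q)$ into two stages: (i) finding a \emph{good} prime $p<L:=k^2\log n$, meaning one that does not divide $(x-y)$ for any pair of distinct $x,y\in X$; and (ii) finding, for that $p$, a value $q\in\{1,\ldots,p-1\}$ that separates every pair of elements of $X$ under $h_{p,q}$.

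For stage (i), I would bound the number of bad primes below $L$ by counting distinct prime divisors of the product $N:=\prod_{\{x,y\}\subseteq X,\,x\neq y}|x-y|$. Since $|x-y|\le n$ and $|X|=k$, one has $N\le n^{\binom{k}{2}}$, and so the number of distinct prime divisors of $N$ is at most $\log_2 N\le\binom{k}{2}\log_2 n$. Invoking prime-counting estimates (Chebyshev-style bounds refined by the hypothesis ``$n$ sufficiently large''), $\pi(L)$ is made to exceed this count, which delivers a good prime $p<L$. Matching the precise $k^2\log n$ budget is the delicate point here, since the required asymptotic $\pi(L)\sim L/\ln L$ must dominate $\binom{k}{2}\log_2 n$ with a little slack; this is exactly what the ``sufficiently large $n$'' proviso buys.

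For stage (ii), fix a good prime $p<L$ and a pair $\{x,y\}\subseteq X$. Writing $a:=qx\bmod p$ and $b:=qy\bmod p$ as integers in $[0,p)$, the collision $h_{p,q}(x)=h_{p,q}(y)$ amounts to $a\equiv b\pmod{k^2}$, i.e.\ $a-b$ is a multiple of $k^2$ lying in the interval $(-p,p)$; and because $p$ is good we have $p\nmid(x-y)$, hence $a\ne b$, so we may restrict to nonzero multiples. There are at most $2\lfloor(p-1)/k^2\rfloor<2p/k^2$ such multiples, and for each target value $m=jk^2$ the condition $a-b=m$ reduces to the congruence $q(x-y)\equiv m\pmod p$, which has a unique solution in $q\bmod p$ again because $p$ is good. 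Consequently at most $2p/k^2$ values of $q$ are bad for the given pair.

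Summing over the $\binom{k}{2}$ pairs of $X$ by union bound yields at most $\binom{k}{2}\cdot(2p/k^2)=(k-1)p/k<p-1$ bad values of $q$ altogether, so some $q\in\{1,\ldots,p-1\}$ is simultaneously good for every pair, making $h_{p,q}$ injective on $X$; this is the sought witness. The main obstacle, as already flagged, is stage (i): the crude bound $\pi(L)\gtrsim L/\ln L$ falls short of the required $\binom{k}{2}\log_2 n$ by a factor of order $\ln L$, so closing the gap really depends on the sharpened prime-density estimates permitted by ``sufficiently large $n$''. The remainder of the argument is elementary modular-arithmetic bookkeeping and a single union bound.
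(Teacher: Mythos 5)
Note first that the paper itself gives no proof of this lemma: it is imported verbatim, with a citation, from \cite[p.~349]{FlumGrohe06-parameterizedcomplexity}, and the argument there is the classical Fredman--Koml\'os--Szemer\'edi one that you are reconstructing. Your stage (ii) is exactly that argument and is correct (minor bookkeeping aside: a good prime satisfies $p\ge k$ by pigeonhole, and if $p\le k^2$ the reduction mod $k^2$ is vacuous and $q=1$ works). The genuine gap is stage (i), and you flagged it yourself but misdiagnosed the cure: no ``sharpened prime-density estimate'' can close it. The prime number theorem is an asymptotic \emph{equality}, $\pi(L)=(1+o(1))L/\ln L$, so with $L=k^2\log n$ the pool of primes below $L$ has size $\approx k^2\log n/\ln L$, while your bound on bad primes is $\binom{k}{2}\log_2 n\approx\tfrac12 k^2\log_2 n$; the ratio of bad to available is of order $\ln L$ and \emph{grows} with $n$. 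Making $n$ larger makes your union bound worse, not better --- the ``sufficiently large $n$'' proviso only absorbs error terms in the prime number theorem; it cannot buy $\pi(L)=\Omega(L)$, which is false. Note also that stage (i) is indispensable in your decomposition: if $p$ divides some difference $x-y$ with $x,y\in X$, then $h_{p,q}(x)=h_{p,q}(y)$ for \emph{every} $q$, so a bad $p$ cannot be rescued in stage (ii).

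The missing idea is to count only \emph{large} bad primes. Restrict attention to primes in $(T,L]$ for a suitable threshold $T$ (e.g.\ $T=L/2$, or $T=L^{2/3}$ if the logarithm is natural). Every bad prime divides $D:=\prod_{\{x,y\}\subseteq X,\,x\neq y}|x-y|\le n^{\binom{k}{2}}$, and since each such prime exceeds $T$, the number of distinct prime divisors of $D$ lying in $(T,L]$ is at most $\log_T D\le\binom{k}{2}\log_2 n/\log_2 T$ --- your count divided by precisely the $\log T=\Theta(\ln L)$ factor you were missing. The prime number theorem then gives $\pi(L)-\pi(T)\gtrsim (L-T)/\ln L$ primes in $(T,L]$, and a short computation (this is where ``sufficiently large $n$'' genuinely enters) shows this exceeds the bad count, yielding a good prime $p<k^2\log n$ not dividing any difference. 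With stage (i) repaired in this way, your stage (ii) union bound over $q$ finishes the proof exactly as you wrote it.
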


\begin{proof}(of Lemma~\ref{lem:easy}) 
Choose constants $c,d\in\mathbb N$ such that every structure $\str P\in \mathcal{P}$ has 
at most $c$ many unfoldable edges, and whenever an edge thereof is unfoldable of degree $g$,
it holds that $g \leq d$.

Call an edge of a rooted path structure $\str P$ {\em critical} if its atomic type is different from the atomic type of the inverse of the edge preceding it. More precisely, let $k:=|P|$ and $p_1,\ldots, p_k$ be the enumeration of $\str P$; recall we write $e_i$ for the edge $(p_i,p_{i+1})$ of $\str P$. For $i\in[k]$ the edge $e_i$ is {\em critical}
if $i>1$ and $\atyp(e_i)\neq\atyp(e_{i-1}^{-1})$. This means $e_i$ is either unfoldable or
$\atyp(e_i)\subsetneq\atyp(e_{i-1}^{-1})$. Observe that $e_2$ is critical since $p_3\notin\textit{root}^{\str P}$, so 
$\atyp(e_2)\neq\atyp(e_1^{-1})$. Thus every rooted path structure of size at least 3 (i.e. $k\ge 3$)
has at least one critical edge.

For example, consider again the rooted path structures depicted in the Introduction. The critical edges of the structures in (1), (2) and (3) are
the unfoldable ones (cf.~Examples~\ref{exs:fold}). The critical edges of the structures in (4) are the second one (unfoldable) and the first non-symmetric one (not unfoldable). Note that the structures in (3) and (4) have both exactly 2 critical edges. We have already seen that the first class has an associated embedding problem which is PATH-hard, and shall now see  that
the second one has an associated embedding problem which is pl-Turing reducible to $p$-$\lgsh$.

For $C\in\mathbb N$ let $\cl P(C)$ be the class of all $\str P\in \cl P$ having at most $C$ many critical edges.
For every $C\in\mathbb N$ we are going to define a pl-Turing reduction from $\pemb{\cl P(C)}$
 to $p$-$\lgsh$.
 
This suffices to prove the lemma:

\medskip

\noindent{\em Claim 5:}
There exists $C\in\mathbb N$ such that  $\cl P=\cl P(C)$. 
\medskip

\noindent{\em Proof of Claim 5.} Let $t$ denote the number of atomic $\tau$-formulas in two variables.
Let $\str P\in\cl P$ and let $e$ be an unfoldable edge of $\str P$ or the last edge. Let $e'$ be the unfoldable edge preceding $e$ (according to the enumeration of $\str P$); if there is no such unfoldable edge, let $e'$ be the first edge.
Then 
there are $|\atyp(e')|\le t$ many critical edges between $e$ and $e'$. In total,  $\str P$ has at most $C:=c+t(c+1)$ many critical edges.
\hfill$\dashv$\medskip

Let $\str P\in\cl P$ with $k:=|P|> 2+cd+d$ and
enumeration $p_1,\ldots, p_k$. 
Then there exists $a$ with $1 < a < k-d$ such that
none of $e_{a+1}, \ldots, e_{a+d}$ 
are unfoldable.
Fix $a$ to be the minimum such value.
As $\str P$ has at most $c$ unfoldable edges, we have
%
\begin{equation}\label{eq:a}
1<a\le 2+cd.
\end{equation}
The following claim explains our interest in the number $a$. 

\medskip

\noindent{\em Claim 6:}
For all $i\le k-a$ we have $\atyp(e_{a+i})\subseteq\atyp(e_a^{-i})$.
\medskip

\noindent{\em Proof of Claim 6.} 
Suppose not. Choose $i$ minimal such that $\atyp(e_{a+i})\not\subseteq\atyp(e_a^{-i})$. Then for all $j\in [i]$ we have
$\atyp(e_{a+i-j})\subseteq \atyp(e_a^{-(i-j)})$, which implies $\atyp(e_{a+i-j}^{-j})\subseteq \atyp(e_a^{-i})$, and thus
$\atyp(e_{a+i})\not\subseteq\atyp(e_{a+i-j}^{-j})$. This means $e_{a+i}$ is unfoldable of degree $i$, so $i\le d$. 
This contradicts the choice of $a$. 
\hfill$\dashv$\medskip

Our algorithm $\mathbb A(C)$ is going to be recursive, and the depth of the recursion bounded by a constant. It recurses on the parts $\str P\uparrow i$ and $\str P\downarrow i$ of $\str P$ obtained by ``cutting $\str P$ at point $i$.'' These structures are defined for $i\in[k]$ as follows: $\str P\downarrow i$ is the substructure of $\str P$
induced on $\{p_1,\ldots,p_i\}$, and $\str P\uparrow i$ is obtained from the substructure of $\str P$ induced on $\{p_i,\ldots, p_{k}\}$ by declaring $p_i$ the new root, i.e.\ interpreting $\textit{root}\in \tau$ by~$\{p_i\}$.
Without loss of generality we can assume that if $\cl P$ contains $\str P$, then it also contains  all these structures $\str P\uparrow i,\str P\downarrow i$. If this would not be the case, we could add all these structures to $\cl P$ and observe that the resulting bigger class still satisfies the assumptions of the lemma.

Let $(\str P,\str B)$ be an instance of $\pemb{\cl P(C)}$ and $p_1,\ldots, p_k$ 
be the enumeration of $\str P$. 
If $k:=|P|\le 2+cd+d$ or if $n:=|B|$ is not sufficiently large 
in the sense of Lemma~\ref{lem:cc} or if $n\le k$, then $\mathbb A(C)$ uses ``brute force'', that is, it 
runs an XL algorithm for $\pemb{\cl P(C)}$. So assume 
$2+cd+d< k=|P|<|B|=n$ and~$n$ is sufficiently large. 
Then the algorithm $\mathbb A(C)$ computes the number $a$. There are two cases.

\subparagraph{Case} $a>2$. In this case,
 $\mathbb A(C)$ 
loops through all functions $h\colon \{p_1,\ldots,p_{a}\}\to B$. For every such~$h$ the algorithm checks whether
it is an embedding of $\str P\downarrow a$ into~$\str B$. If so, $\mathbb A(C)$ recurses on the instance $(\str P\uparrow a,\str B_h)$
where
$\str B_h$ is obtained from~$\str B$ and $h$ as follows: take the substructure induced in~$\str B$ with universe~$B\setminus \{h(p_1),\ldots, h(p_{a-1})\}$ and change the interpretation of $\textit{root}$ to~$\{h(p_a)\}$.
If the recursive call returns accepting, then $\mathbb A(C)$ halts and accepts. 

By \eqref{eq:a}, each $h$ can be stored using 
$O(\log n)$ bits. Hence, in the current case $\mathbb A$ uses $O(\log n)$ space plus the space required by the recursive calls.

\subparagraph{Case} $a=2$. In this  case we shall use the following construction. It is similar to a construction used
 in  \cite[Theorem~18]{ChenMueller14-paramlogspace}.
We can assume that the $\tau$-structure $\str B$ has universe $B=[n]$ for some $n>2$  sufficiently large in the sense of
 Lemma~\ref{lem:cc}. Using the notation from this lemma, set 
\[
F:= \big\{g \circ h_{p,q} \mid g\colon \big\{0, \ldots, (k-1)^2-1\big\} \to \{2,\ldots,k\}
 \text{ and } q< p< (k-1)^2 \log n\big\}.
\]
For $f\in F$ and $b_1\in B$ define the following graph $\str G(f,b_1)$. Its vertices are $B\setminus \{b_1\}$. Its set of edges is
the symmetric closure of the set of all $(b,b')\in (B\setminus\{b_1\})^2$ such that
\begin{itemize}
\item[--]  $f(b)+1=f(b')\text{ and } \atyp(e^{-f(b)}_{2},\str P)\subseteq\atyp((b,b'),\str B)$; and
\item[--] if $f(b)=2$ then $\atyp(e_{1},\str P)\subseteq\atyp((b_1,b),\str B)$.
\end{itemize}
{\em Informally}, the idea is as follows. The first condition means to put an edge between all strongest (cf.~Claim~7) $\atyp(e_2)$-edges in $\str B$, but only  between vertices with neighboring $f$-colours; the
second condition ensures that a vertex with colour $2$ is isolated unless it is a $\atyp(e_1)$-successor of $b_1$ in $\str B$.

The following two claims pinpoint the properties we need this construction to have.\medskip

\noindent{\em Claim 7:}
Let $b_1\in B,f\in F$ and $s\in f^{-1}(2)$. If $\str G(f,b_1)$ contains a length $k-2$ path with endpoint $s$, then there is an embedding from 
$\str P$ into $\str B$.
\medskip

\noindent{\em Proof of Claim 7.}  Let $s=b_2,b_3,\ldots,b_{k}$ enumerate a path in $\str G(f,b_1)$,
Clearly, $\atyp(e_1,\str P)\subseteq \atyp((b_1,s),\str B)$. 
For $1<j\le k-1$ we have $|f(b_j)-f(b_{j+1})|=1$. Since $f(s)=f(b_2)=2$ it follows 
that $f(b_j)$ has the same parity as $j$. Thus, if $f(b_j)<f(b_{j+1})$, then 
\[
\atyp(e_j)\subseteq\atyp(e_2^{-j})= \atyp(e_2^{-f(b_j)})\subseteq\atyp((b_j,b_{j+1}),\str B),
\]
where the first inclusion holds by Claim 6; if $f(b_{j+1})<f(b_{j})$, then similarly
\[
\atyp(e_j)\subseteq\atyp(e_2^{-j})= \atyp((e_2^{-f(b_{j+1})})^{-1})\subseteq\atyp((b_{j+1},b_{j})^{-1},\str B)=\atyp((b_j,b_{j+1}),\str B).
\]
 It follows that $p_j\mapsto b_j$ defines an embedding of $\str P$ into $\str B$. \hfill$\dashv$\medskip

\noindent{\em Claim 8:}
Let $b_1\in B$. The following are equivalent.
\begin{enumerate}
\item[(a)] There exists $f\in F$ such that $\str G(f,b_1)$ contains a length $k-2$ path with one endpoint in $f^{-1}(2)$.
\item[(b)] There 
exists $g\in F$ such that $\str G(g,b_1)$ contains a path with one endpoint in $g^{-1}(2)$ and the other endpoint in $g^{-1}(k)$.
\end{enumerate}

\noindent{\em Proof of Claim 8.} (b) implies (a) because any path in $\str G(g,b_1)$ connecting points in $g^{-1}(2)$ and $g^{-1}(k)$ has 
length at least~$k-2$. Conversely,
suppose $b_2,\ldots, b_k$ enumerates  a path in~$\str G(f,b_1)$. 
By Lemma~\ref{lem:cc} there exists $g\in F$ such that $g(b_j)=j$ for all $2\le j\le k$. Then $b_2,\ldots, b_k$ enumerates  a path in~$\str G(g,b_1)$ with $b_2\in g^{-1}(2)$ and $b_k\in g^{-1}(k)$.
 \hfill$\dashv$\medskip

We now describe how algorithm $\mathbb A(C)$ works in the current case $a=2$. 
It first loops through all tuples $(b_1,p,q,g,s,t)$ such that
$b_1\in B, q< p< (k-1)^2 \log n, g\colon \big\{0, \ldots, (k-1)^2-1\big\} \to \{2,\ldots,k\}$ and $s,t\in B\setminus\{b_1\}$.
For each such tuple $\mathbb A$ first checks that $g(h_{p,q}(s))=2$ and $g(h_{p,q}(t))=k$. 
Second it checks whether in $\str G(g\circ h_{p,q},b_1)$ 
the vertex $s$ is connected by some path to $t$. If one such check is positive, the algorithm stops and accepts.

It follows from Claims 7 and 8, that if $\mathbb A$ accepts here then indeed $(\str P,\str B)\in \pemb{\cl P(C)}$. 
A tuple $(b_1,p,q,g,s,t)$ can be stored using $O(k\log k+\log n)$ bits. And the graph
$\str G(g\circ h_{p,q},b_1)$ is  pl-computable from the input. The second check can be done using  Reingold's
logarithmic space algorithm \cite{Reingold08-conn-in-logspace}. We thus see that this loop can be implemented within the allowed space.

If this first loop did not cause $\mathbb A(C)$ to accept, then $\mathbb A(C)$ recurses as follows. It computes
the index of the first critical edge after $e_2$, i.e. it computes the minimal  
$2<i\le k-1$ such that $e_{i+1}$ is critical. If there is no critical edge after $e_2$, then $\mathbb A(C)$ sets $i:=k-1$.
In both cases we have 
for all $2\le j\le i$: 
\begin{equation}\label{eq:aalt}
\atyp(e_{j})=\atyp(e_2^{-j}).
\end{equation}

Then $\mathbb A(C)$ loops a second time
through all tuples  $(b_1,p,q,g,s,t)$ as before. It first checks
 that $g(h_{p,q}(s))=2$ and $g(h_{p,q}(t))=i$.
Then it queries the
 oracle whether 
 \[
 (\str G(g\circ h_{p,q},b_1), s,t,i-2,k-2)\in \lgsh.
 \] 
 If the oracle answers ``no'', the algorithm considers the next tuple. If the oracle answers ``yes'', then $\mathbb A$ recurses on 
 $(\str P\uparrow i,\str B')$ where $\str B'$ is obtained as follows. Take the induced substructure of $\str B$ with universe 
 $
 B':=\{b\in B\mid i< g\circ h_{p,q}(b)\le k\}\cup\{t\}
 $ 
 and change the interpretation of $\textit{root}$ to $\{t\}$.
 If the recursive call returns accepting, then $\mathbb A(C)$ halts and accepts.

 Note the instance $\mathbb A$ recurses to is  pl-computable from the input. So also in this second loop, $\mathbb A(C)$ uses parameterized logarithmic space plus the space required by the recursive calls. 
 We argue for correctness: if $\mathbb A(C)$ accepts here, then $(\str G(g\circ h_{p,q},b_1), s,t,i-2,k-2)\in \lgsh$. By the fact, that $\mathbb A(C)$ entered the second loop, it follows that the statement Claim 8 (b) is false. Then the statement Claim 8 (a) is false. This implies that $\str G(g\circ h_{p,q},b_1)$ does not contain a length $k-2$ path with endpoint $s$. Therefore,
$(\str G(g\circ h_{p,q},b_1), s,t,i-2,k-2)\in\lgsh$ implies that there exists a path of length exactly $i-2$ from $s$ to $t$ in $\str G(g\circ h_{p,q},b_1)$. 

Let
$s=b_2,\ldots, b_i=t$ enumerate such a path. Then 
\begin{equation}\label{eq:col}
g(h_{p,q}(b_j))=j
\end{equation}
for all $2\le j\le i$.
Assuming inductively that the recursive call accepts correctly, we 
have an embedding $h$ from $\str P\uparrow i$ into $\str B'$. Since $\textit{root}^{\str B'}=\{t\}$ 
we have $h(p_i)=t=b_i$. Hence, $g(h_{p,q}(h(p_j)))>i$ for all $i<j\le k$, and hence $h(p_j)\neq b_{j'}$ for all $2\le j'<i<j\le k$ (by \eqref{eq:col}).
It follows that the following function $h'$ is injective:
map $p_j$ to $b_j$ for $1\le j\le i$, and map $p_j$ to $h(p_j)$ for $i<j\le k$.
%
Moreover, $h'$ is an embedding from $\str P$ into $\str B$: for $2\le j< i$ we have
$h'(e_j)=(b_j,b_{j+1})$ and
\[
\atyp(e_j)=\atyp(e_2^{-j})\subseteq\atyp((b_j,b_{j+1}),\str B),
\]
where the equality follows from \eqref{eq:aalt},  and the inclusion from $(b_j,b_{j+1})$ being an
 edge in $\str G(g\circ h_{p,q},b_1)$ and \eqref{eq:col}. We leave it to the reader to check
$\atyp(e_j)\subseteq \atyp(h'(e_j),\str B)$ if $j=1$ or $i\le j\le k-1$.

We have argued that in all cases when $\mathbb A(C)$ accepts it does correctly so. Conversely, it is routine to verify, using Lemma~\ref{lem:cc}, that $\mathbb A(C)$ accepts if $(\str P,\str B)\in\emb{\cl P(C)}$. We conclude that  $\mathbb A(C)$ decides 
$p$-$\emb{\cl P(C)}$.
Obviously, the oracle access is bounded. We are thus left to analyze the space complexity of $\mathbb  A(C)$. We already argued that
$\mathbb A$ uses parameterized logarithmic space plus the space needed for the recursive calls.
It is thus  sufficient to check that the depth of the recursion is bounded by a constant. 

In all cases $\mathbb A(C)$ recurses on $\str P\uparrow i$ for some $i>2$.
The number of critical edges of $\str P\uparrow i$ is less than or equal to the number of critical edges in $\str P$. The same holds for the number of unfoldable edges. 
Indeed, each critical edge of $\str P\uparrow i$ besides possibly its second one, is also critical in 
$\str P$. And clearly every edge unfoldable in $\str P\uparrow i$ is also unfoldable in $\str P$.

In Case $a>2$, the algorithm recurses on $\str P\uparrow a$. That 
 $a>2$ implies that the first $a-1$ edges in $\str P$ contain at least one unfoldable edge. It follows that
$\str P\uparrow a$ has less unfoldable edges than $\str P$. 

In Case $a=2$,  $\mathbb A(C)$ recurses on $\str P\uparrow i$ for a certain $i>2$. If this $i$ equals $k-1$, then
$\str P\uparrow i$ has only one edge and the recursive call uses ``brute force'' without any further recursion. If $i<k-1$, then
$e_{i+1}$ is a critical edge. This becomes the second edge in $\str P\uparrow i$. Since the critical $e_2$ is not present in
$\str P\uparrow i$, this structure  has strictly less critical edges than $\str P$.

It follows that in each recursive call either the number of unfoldable edges drops or ``brute force'' is applied or 
the number of critical edges drops. We conclude that the recursion depth is at most~$c+C+1$.
\end{proof}

%

\section{Dichotomy for rooted oriented paths}

Let $\tau$ be the vocabulary $\{ \textit{root}, E \}$
where $\textit{root}$ is a unary relation symbol and $E$
is a binary relation symbol.
Let us say that a structure $\str P$ over $\tau$
is a \emph{rooted oriented path} if it is a rooted path structure
with enumeration $p_1, \ldots, p_k$ 
such that, for each $i \in [k-1]$, 
exactly one of the two pairs $(p_i, p_{i+1})$,
$(p_{i+1}, p_i)$ is contained in $E^{\str P}$; and,
no other pairs are in $E^{\str P}$, in particular $\str P$ contains no loops (i.e.\ $E^{\str P}$ is irreflexive). 
For such a structure and $C \geq 1$, let us say that 
the structure has a \emph{$C$-alternating tail} if,
for each $i \geq C$, the edge $e_i$
is foldable
if it exists (that is, if $i+1 \leq k$). This means, that the edges $e_C,e_{C+1},\ldots$ alternate in direction. For example,\medskip

\begin{tikzpicture}

\tikzset{vertex/.style = {shape=circle,draw}}
\tikzset{edge/.style = {->,> = latex'}}

\node[vertex,fill] (p1) at  (1.5,0) {$p_1$};
\node[vertex] (p2) at  (3,0) {$p_2$};
\node[vertex] (p3) at  (4.5,0) {$p_3$};
\node[vertex] (p4) at  (6,0) {$p_4$};
\node[vertex] (p5) at  (7.5,0) {$p_5$};
\node[vertex] (p6) at  (9,0) {$p_6$};
\node[vertex] (p7) at  (10.5,0) {$p_7$};
\node[vertex] (p8) at  (12,0) {$p_8$};
\node[vertex] (p9) at  (13.5,0) {$p_9$};

\draw[edge] (p2) to (p1);
\draw[edge] (p2) to (p3);
\draw[edge] (p3) to (p4);
\draw[edge] (p5) to (p4);
\draw[edge] (p5) to (p6);
\draw[edge] (p7) to (p6);
\draw[edge] (p7) to (p8);
\draw[edge] (p9) to (p8);

\end{tikzpicture}

\noindent pictures a rooted oriented path with a $4$-alternating tail. \medskip

We establish the following dichotomy theorem.

\begin{theorem}
\label{thm:dichotomy-rooted-oriented-paths}
Let ${\cl A}$ be a decidable class of rooted oriented paths.
If there exists $C \geq 1$ such that each structure in ${\cl A}$
has a $C$-alternating tail, then 
$\pemb{\cl A}$ is in $\paraL$.
Otherwise, $\pemb{\cl A}$ is $\PATH$-complete.
\end{theorem}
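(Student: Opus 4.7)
The plan is to prove both directions of the dichotomy. For the easy direction, I assume there is a $C \geq 1$ such that every $\str P \in \mathcal{A}$ has a $C$-alternating tail, and construct a $\paraL$ algorithm for $\pemb{\mathcal{A}}$. On input $(\str P, \str B)$, the algorithm enumerates all tuples $(b_1, \ldots, b_{C-1}) \in B^{C-1}$ (each tuple stored in $O(C \log |B|) = O(\log |B|)$ space since $C$ is a fixed constant of the class); for each tuple that defines a valid partial embedding of the short prefix $\str P \downarrow (C-1)$, the residual task is to extend it to an embedding of the rooted alternating oriented path $\str P \uparrow (C-1)$ into the substructure of $\str B$ with $b_{C-1}$ re-designated as the root and $\{b_1, \ldots, b_{C-2}\}$ removed from the universe (we do not store the modified structure; we check its relations implicitly in $O(\log |B|)$ space). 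This residual problem is in $\paraL$ via color coding (Lemma~\ref{lem:cc}) together with Reingold's algorithm, applied to a bipartite-double-cover-style graph on $B \times \{0,1\}$ that encodes the alternating directional constraint as undirected adjacencies, in the spirit of the $\paraL$ algorithm for $p$-\textsc{path} of~\cite{ChenMueller14-paramlogspace}.

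For the hard direction, I assume no such $C$ exists and show that $\mathcal{A}$ has unbounded unfoldability degree, reducing to Lemma~\ref{lem:hard}. The key combinatorial input: partition the edges of any rooted oriented path $\str P$ into its maximal alternating segments $S_1, \ldots, S_J$ of lengths $m_1, \ldots, m_J$; since the atomic type of an edge in an oriented path is determined entirely by its orientation, each segment break occurs exactly at an unfoldable edge. A direct calculation of atomic types shows that the first edge of $S_j$ for $j \geq 2$ has unfoldability degree exactly $m_{j-1}$ (it is unfoldable at every offset $\ell \leq m_{j-1}$, but foldable at offset $m_{j-1}+1$ since the parities line up once one crosses back over the previous segment boundary), while all remaining edges have degree $0$; hence the total unfoldability degree of $\str P$ equals $m_1 + \cdots + m_{J-1} = (|P|-1) - m_J$. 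Failure of the $C$-alternating tail condition means that for every $C$ some $\str P_C \in \mathcal{A}$ has an unfoldable edge at position $\geq C$, forcing the last alternating segment of $\str P_C$ to start at position $\geq C$ and hence total unfoldability degree $\geq C-1$. Thus $\mathcal{A}$ has unbounded unfoldability degree, so Lemma~\ref{lem:hard} furnishes a pl-reduction from $p$-$\ustcon$ to $\pemb{\mathcal{A}}$; combined with Theorems~\ref{thm:ustcon} and~\ref{thm:upperbound}, $\pemb{\mathcal{A}}$ is $\PATH$-complete.

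The main obstacle is the $\paraL$ upper bound: Lemma~\ref{lem:easy} only provides a pl-Turing reduction to $p$-$\lgsh$, whose own complexity is open, so one must leverage the uniform atomic type structure of the purely alternating tail to replace the long-short oracle queries by plain undirected connectivity checks that Reingold's algorithm handles within the $\paraL$ budget.
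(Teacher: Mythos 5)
Your proposal is correct and follows essentially the same route as the paper: the hardness half computes the unfoldability degree of a rooted oriented path as the sum of the lengths of all but the last maximal alternating segment (exactly the paper's telescoping argument $\sum_j (i_j - i_{j-1}) = i - 1 \ge C-1$) and then invokes Lemma~\ref{lem:hard}, while the positive half brute-forces an embedding of the length-$C$ prefix and decides the purely alternating tail by color coding plus a Reingold connectivity check between the color classes of $C$ and $k$, which is precisely the paper's algorithm $\mathbb B$ and its equivalence of statements (a), (b), (c). The only cosmetic difference is your $B \times \{0,1\}$ double cover for the direction constraint, which the paper absorbs into the parities of the colors in $\{C, \ldots, k\}$.
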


\begin{proof} 
For the hardness result, assume that there exists no constant $C$
with the described property. By Theorem~\ref{thm:main} is suffices to show that
for each $c \geq 1$ 
there exists  $\str P\in \cl P$ of 
unfoldability degree at least $c-1$. Indeed, if $\str P\in\cl P$ does not have a $c$-alternating tail, 
then the  last unfoldable edge $e_i$ in $\str P$ satisfies $i\ge c$.
Let $e_{i_1},\ldots,e_{i_r}$ with $i_1<\cdots<i_r=i$ list the unfoldable edges.
Set $i_0:=1$ and observe that $e_{i_j}$ is unfoldable of degree $i_j - i_{j-1}$. It follows that the unfoldability 
degree of $\str P$ is at least $i - 1 \ge c - 1$.

Now assume that there exists a constant $C\ge 1$ such that each $\str P\in\cl A$ has a $C$-alternating tail. We have to find an algorithm 
deciding $\pemb{\cl A}$ in parameterized logarithmic space. This algorithm is akin to the oracle algorithm constructed in the proof 
of Lemma~\ref{lem:easy}. 

Let $(\str P,\str B)$ be an instance of  $\pemb{\cl A}$ and $p_1,\ldots, p_k$ be the enumeration of $\str P$. We assume $k>C$ and
 $B=[n]$ for some $n>2$  sufficiently large in the sense of Lemma~\ref{lem:cc}. Set
\[
F':= \Big\{g \circ h_{p,q} \mid g\colon \big\{0, \ldots, (k-C+1)^2-1\big\} \to \{C,\ldots,k\}
 \text{ and } q< p< (k-C+1)^2 \log n\Big\}.
\]

In the following we understand that $f$ ranges over $F'$ and $h$ ranges over the set of functions from $\{p_1,\ldots, p_C\}$ to $ B$. 
The graph $\str G(f,h)$ has vertices $B\setminus \{h(p_1),\ldots, h(p_{C-1})\}$. 
Its set of edges is the symmetric closure of the set of those $(b,b')\in E^{\str B}$ which satisfy
\begin{itemize}\itemsep=0pt
\item[--] $|f(b)-f(b')|=1$, $f(b)\ge C,f(b')\ge C$ and $(p_{f(b)},p_{f(b')})\in E^{\str P}$;
\item[--] if $f(b)=C$ then $b=h(p_C)$;
\item[--] if $f(b')=C$ then $b'=h(p_C)$.
\end{itemize}
Then the following are equivalent.
\begin{enumerate}\itemsep=0pt
\item[(a)] There are
$f,h$ such that $h$ is an embedding of $\str P\downarrow C$ into $\str B$ and 
$\str G(f,h)$ contains a path with one endpoint in $f^{-1}(C)$ and the other endpoint in $f^{-1}(k)$.
\item[(b)] There are
$f,h$ such that $h$ is an embedding of $\str P\downarrow C$ into $\str B$ and 
$\str G(f,h)$ contains a length $k-C$ path with one endpoint in $f^{-1}(C)$.
\item[(c)] $(\str P,\str B)\in\emb{\cl A}$.
\end{enumerate}
Indeed, that (a) and (b) are equivalent is seen similarly as Claim 8. That (b) implies (c) is seen 
similarly as Claim 7. Finally, that (c) implies (a) is easy to see using Lemma~\ref{lem:cc}.

Algorithm $\mathbb B$ on $(\str P,\str B)$ checks that $|P|>C$ and $|B|\ge 2$ 
is sufficiently large in the sense of Lemma~\ref{lem:cc}. If this is not the case, $\mathbb B$ runs an XL algorithm for $\pemb{\cl A}$.
Otherwise $\mathbb B$ checks statement (a). This can be implemented within the allowed space similarly 
as explained for the second loop of the algorithm constructed in the proof of 
Lemma~\ref{lem:easy}. 
\end{proof}

\section{The long-short path problem is unavoidable}

We show here that,
up to pl-reduction, 
there is an embedding problem
equivalent to the long-short path problem.

\begin{theorem}
\label{thm:long-short-as-embedding-problem}
Let $\tau$ be the vocabulary containing the unary relation symbol
$\textit{root}$ and a binary relation symbol~$E$.
There exists a class ${\cl A}$ of 
rooted path structures 
of vocabulary $\tau$ 
such that 
$\pemb{\cl A}$ and $p$-$\lgsh$
are interreducible, with respect to 
pl-Turing reductions.
\end{theorem}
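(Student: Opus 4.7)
I would take $\cl A$ to be the class depicted in example~(4) of the Introduction: for each pair of positive integers $u,a$, $\cl A$ contains the rooted path structure $\str P_{u,a}$ with enumeration $p_1,\ldots,p_{u+a+1}$, $\textit{root}^{\str P_{u,a}}=\{p_1\}$, whose first $u$ edges are symmetric (with both directions present in $E$) and whose last $a$ edges are singly-oriented in strict alternating fashion.

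For the direction $\pemb{\cl A}\le p\text{-}\lgsh$, I would verify that every structure in $\cl A$ has unfoldability degree $0$, as already observed in Example~\ref{exs:fold}(4). Indeed, for any undirected $e_i$ the atomic type contains both orientations of $E$, which is contained in $\atyp(e_{i-1}^{-1})$; the transition edge $e_{u+1}$ has type consisting of only one orientation of $E$ and is contained in $\atyp(e_u^{-1})=\{E(x_1,x_2),E(x_2,x_1)\}$; and for consecutive alternating edges $\atyp(e_i)=\atyp(e_{i-1}^{-1})$ holds exactly. No edge is unfoldable, so Theorem~\ref{thm:main} supplies a pl-Turing reduction from $\pemb{\cl A}$ to $p\text{-}\lgsh$.

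For the other direction, I would exhibit a pl-reduction from $p\text{-}\lgsh$ to $\pemb{\cl A}$. On input $(\str G,s,t,k,\ell)$ with $k<\ell$ (assuming WLOG $k\ge 1$; the $k=0$ case reduces to long-path existence and is handled by a minor variant), the reduction outputs $(\str P,\str B)$ where $\str P:=\str P_{k,\ell-k}$, the universe of $\str B$ is $B:=G\cup\{w_1,\ldots,w_{\ell-k}\}$ with fresh auxiliary vertices $w_i$, $\textit{root}^{\str B}:=\{s\}$, and $E^{\str B}$ is the union of (i) the symmetric closure of $E^{\str G}$, putting both orientations of every $\str G$-edge into $\str B$, and (ii) single-orientation edges among $t,w_1,\ldots,w_{\ell-k}$ carrying the orientations that match the alternation pattern of the tail edges $e_{k+1},\ldots,e_{k+(\ell-k)}$ of $\str P$ when $p_{k+1}$ is identified with $t$. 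The new parameter is $|P|=\ell+1$, a function of the input parameter, and the construction runs in logarithmic space.

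The main obstacle is verifying $(\str P,\str B)\in\emb{\cl A}\iff (\str G,s,t,k,\ell)\in\lgsh$. The $(\Leftarrow)$ direction is constructive: a length-$k$ simple path from $s$ to $t$ in $\str G$ followed by $w_1,\ldots,w_{\ell-k}$ gives an embedding, while a simple path in $\str G$ of length at least $\ell$ starting at $s$ can be truncated to length exactly $\ell$ and embeds $\str P$ entirely inside $V(G)$. For the harder $(\Rightarrow)$ direction, the key observation is that the only pairs in $\str B$ carrying both orientations of $E$ are the edges of $\str G$; hence the undirected prefix $p_1,\ldots,p_{k+1}$ of $\str P$ is forced to embed into $V(G)$ as a simple path of length $k$ starting at $s$. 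Thereafter, because the auxiliary tail is attached to $V(G)$ only at $t$ and each of its edges carries only one orientation, the alternating suffix either remains entirely within $V(G)$---forcing, together with the prefix, a simple path of length $\ell$ in $\str G$ from $s$---or requires $h(p_{k+1})=t$ and then proceeds through $w_1,\ldots,w_{\ell-k}$ in the unique admissible way; no mixed scenario is possible by injectivity. In either case a witness for $\lgsh$ is obtained, completing the argument.
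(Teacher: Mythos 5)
Your forward direction is fine: the class of example~(4) has unfoldability degree $0$, so Theorem~\ref{thm:main} reduces $\pemb{\cl A}$ to $p$-$\lgsh$. The gap is in the reverse reduction, specifically in your claim that ``no mixed scenario is possible by injectivity.'' Injectivity does not rule it out. Every edge of the alternating tail of $\str P_{k,\ell-k}$ whose orientation agrees with that of $e_{k+1}$ (i.e.\ every tail edge at odd offset) is a candidate for mapping onto the single-oriented bottleneck edge between $t$ and $w_1$, because your $w$-chain orientations repeat with period~$2$. Concretely, let $\str G$ be a simple path $s=v_0,v_1,\ldots,v_{k+2}=t$ and take $\ell\ge k+3$. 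Then $(\str G,s,t,k,\ell)$ is a no-instance of $\lgsh$ (the unique $s$--$t$ path has length $k+2\neq k$, and no path from $s$ has length $\ge\ell$). Yet $\str B$ admits the embedding $p_j\mapsto v_{j-1}$ for $j\le k+3$ (the tail edges $e_{k+1},e_{k+2}$ land on bidirectional $\str G$-edges, which satisfy any orientation), then $p_{k+4}\mapsto w_1,\ldots,p_{\ell+1}\mapsto w_{\ell-k-2}$; the crossing edge $e_{k+3}$ has the same orientation as $e_{k+1}$, hence matches the $t$--$w_1$ edge, and all subsequent orientations match by periodicity. So your map is not a reduction.

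The paper avoids this by \emph{not} using an alternating tail: it takes $\str P_{k,\ell}$ to be bidirectional everywhere except for one single directed edge $(p_{k+1},p_{k+2})$, and makes $(t,q_1)$ the unique one-way edge of the target. Since every other edge of $\str P_{k,\ell}$ demands both orientations, only $(p_{k+1},p_{k+2})$ can traverse the bottleneck, which pins $h(p_{k+1})=t$ and forces the prefix image to be an $s$--$t$ path of length exactly~$k$. (This still yields bounded unfoldability degree --- one edge unfoldable of degree $1$ --- so the forward direction via Theorem~\ref{thm:main} goes through.) Note also that the paper explicitly leaves the example-(4) class as its open ``most annoying'' case, conjectured to lie in $\paraL$; establishing your claimed equivalence for that class would resolve that conjecture one way or the other, which is a strong hint that it cannot be done by the construction you propose.
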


\begin{proof}
For $0 \leq k < \ell$,
define $\str P_{k,\ell}$ to be the rooted path structure
(on vocabulary $\tau$) with universe 
$\{ p_1, \ldots, p_{\ell+1} \}$ 
where
$\textit{root}^{\str P_{k,\ell}} = \{ p_1 \}$
and
$E^{\str P_{k,\ell}}$
is the union of
$\{ (p_i, p_{i+1}), (p_{i+1}, p_i) ~|~ 
i \in [\ell] \setminus \{ k+1 \} \}$
with $\{ (p_{k+1}, p_{k+2}) \}$.
This structure has at most one unfoldable edge, namely
the edge $(p_{k+2}, p_{k+3})$ (if it exists);
this edge is unfoldable of degree $1$, but not of degree $2$.
Let ${\cl A}$ be the class containing all such structures~$\str P_{k,\ell}$.
It follows from Theorem~\ref{thm:main} 
that the problem $\pemb{\cl A}$
reduces to $p$-$\lgsh$.

We thus establish that $p$-$\lgsh$ reduces to $\pemb{\cl A}$.
Let $({\str G}, s, t, k, \ell)$ be an instance of 
the problem $p$-$\lgsh$.
The reduction produces the instance $({\str P_{k, \ell}}, {\str G'})$
where ${\str G'}$ is defined as follows:
\begin{eqnarray*}
G' &:=& G \cup \{ q_1, \ldots, q_{\ell-k} \},\\
\textit{root}^{\str G'} &:=& \{ s \},\\
E^{\str G'} &:=& E^{\str G} \cup \{ (t, q_1) \} \cup 
\{ (q_i,q_{i+1}), (q_{i+1}, q_i) ~|~ i \in [\ell-k-1] \}.
\end{eqnarray*}


Suppose that the original instance is a yes instance of
$p$-$\lgsh$.  If ${\str G}$ contains a path of length at least 
$\ell$ with endpoint $s$, then 
the structure ${ \str P_{k,\ell} }$ admits an injective
homomorphism to ${\str G'}$, namely, by simply mapping
the elements of ${ \str P_{k,\ell} }$ onto the path.
If ${\str G}$ contains an $s$-$t$ path of length exactly $k$,
then there is also an injective homomorphism; namely,
the elements $p_1, \ldots, p_{k+1}$ are mapped onto
the $s$-$t$ path, with $p_1$ mapped to $s$ and
$p_{k+1}$ mapped to $t$,
and the elements $p_{k+2}, \ldots, p_{\ell + 1 }$
are mapped to $q_1, \ldots, q_{\ell-k}$, respectively.

Suppose that the created instance is a yes instance of 
$\pemb{\cl A}$; let $h$ be the injective homomorphism witnessing this.
If $q_1$ is not in the image of $h$, then none of the points $q_i$
are, and hence $h$ is an injective homomorphism into ${\str G}$,
implying that ${\str G}$ has a path of length $\ell$ with endpoint $s$.
If $q_1$ is in the image of $h$, then it must hold that an edge
of ${\str P_{k,\ell}}$ maps onto $(t, q_1)$, since any path from
$s$ to $q_1$ in ${\str G'}$ must touch $t$ immediately prior to touching 
$q_1$.  But since $(q_1, t) \notin E^{\str G'}$,
the only edge of ${\str P_{k,\ell}}$ that can map onto 
$(t,q_1)$ is $(p_{k+1}, p_{k+2})$, implying that the image of
$p_1, \ldots, p_{k+1}$ under $h$ yields an $s$-$t$ path of 
length $k$ in ${\str G}$.
\end{proof}

\subparagraph*{Acknowledgements} We thank the anonymous referees for their careful reading of the manuscript.  Following their suggestions we added a number of examples, comments and discussions.
The first author was supported by the
Spanish Project MINECO COMMAS TIN2013-46181-C2-R, Basque Project GIU15/30, and Basque Grant UFI11/45.
The authors wish to thank Eric Allender, Yijia Chen,
and Michael Elberfeld for useful comments and discussion.







\end{document}